\newcommand{\until}{\:\mathcal{U}}
\newcommand{\since}{\:\mathcal{S}}
\newcommand{\R}{\:\mathbb{R}}
\newcommand{\N}{\:\mathbb{N}}
\newcommand{\fut}{\Diamond}
\newcommand{\past}{\mbox{$\Diamond\hspace{-0.27cm}-$}}
\mathchardef\mhyphen="2D
\mathchardef\mhyph="2D
\newcommand{\nex}{\mathcal{O}}
\newcommand{\mtlfull}{\mathsf{MTL}^{pw}[\until_I, \since_{I}]}
\newcommand{\mtlfullunary}{\mathsf{MTL}^{pw}[\fut_I, \past_{I}]}
\newcommand{\mtlu}{\mathsf{MTL}^{pw}[\until_I,\since]}
\newcommand{\mtl}{\mathsf{MTL}^{pw}[\until_I]}
\newcommand{\mtlfut}{\mathsf{MTL}^{c}[\fut_I]}
\newcommand{\mtluc}{\mathsf{MTL}^{c}[\until_I]}
\newcommand{\mtlfutpw}{\mathsf{MTL}^{pw}[\fut_I]}
\newcommand{\mtlfutp}{\mathsf{MTL}^{pw}[\fut, \past_{I}]}
\newcommand{\mtlsns}{\mathsf{MTL}^{pw}[\until_I,\since_{NS}]}
\newcommand{\mtluns}{\mathsf{MTL}^{pw}[\until_{NS},\since_I]}
\newcommand{\mitl}{\mathsf{MTL}^{pw}[\until_{NS},\since_{NS}]}
\newcommand{\mitlfp}{\mathsf{MTL}^{pw}[\fut_{NS},\past_{NS}]}
\newcommand{\gmtlfull}{\mathsf{MTL}[\until_I, \since_{I}]}
\newcommand{\gmtlfullunary}{\mathsf{MTL}[\fut_I, \past_{I}]}
\newcommand{\gmtl}{\mathsf{MTL}[\until_I]}
\newcommand{\gmitl}{\mathsf{MTL}[\until_{NS},\since_{NS}]}
\newcommand{\oomit}[1]{}
\begin{document}

\title{On the Decidability and Complexity of Some Fragments of Metric Temporal Logic}

\author{Khushraj Madnani$^1$, Shankara Narayanan Krishna$^{1}$ and Paritosh K. Pandya$^{2}$}
\institute{
$^1$IIT Bombay, Powai, Mumbai, India 400 076\\
$^2$Tata Institute of Fundamental Research, Colaba, Mumbai, India 400 \\
\email{\{khushraj,krishnas\}@cse.iitb.ac.in, pandya@tcs.tifr.res.in}
}

\maketitle

\begin{abstract}
Metric Temporal Logic, $\mtlfull$, is amongst the most studied
real-time logics. It exhibits considerable diversity in expressiveness and
decidability properties based on the permitted set of modalities and the
nature of time interval constraints $I$. 
In this paper, we sharpen the decidability results by showing that the
satisfiability of $\mtlsns$ (where $NS$ denotes non-singular intervals) 
is also decidable over finite pointwise
strictly monotonic time. We give a satisfiability preserving reduction from
$\mtlsns$ to the decidable logic $\mtl$ of Ouaknine and Worrell
using a novel technique of temporal projections with oversampling. 
We also investigate the decidability of unary
fragment $\mtlfullunary$ 
and we compare the expressive powers of some of these fragments. 
\end{abstract}

\section{Introduction}
Real time logics specify properties of how system state evolves with time and where
quantitative  time distance between events is significant.
Metric Temporal Logic ($\mathsf{MTL}$) introduced by Koymans \cite{Koymans90} is a prominent real-time logic. In this logic, the temporal modalities $\until_I$ and $\since_I$ are constrained 
by a time interval $I$.  $\mathsf{MTL}$  exhibits considerable diversity in expressiveness and
decidability 
 based on the permitted set of modalities and the
nature of time interval constraints $I$. 

The classical results of Alur and
Henzinger showed that satisfiability of $\gmtlfull$ as well as its model checking problem against timed automata are undecidable in general \cite{AlurH91}, \cite{Hen91}. In their seminal paper \cite{AFH96}, the authors proposed
a sublogic $\gmitl$ having only non-singular intervals $NS$, where the satisfiability is decidable with $\mathit{EXPSPACE\--complete}$ complexity. 
The satisfiability of $\gmtl$ was considered to be undecidable for  a long time, until Ouaknine and Worrell \cite{OuaknineW05} proved that 
the satisfiability of $\mtl$ over finite timed words is decidable, albeit with a non-primitive recursive lower bound. 
Subsequently, in \cite{OuaknineW06}, it was shown that over infinite timed words, the satisfiability  of $\mtl$ is undecidable. The satisfiability of $\mtluc$ over continuous time is also undecidable.

In this paper, we sharpen the decidability results for fragments of $\mtlfull$.
We consider the logic $\mtlsns$ which has been shown \cite{concur11} to be strictly more  expressive 
than the known decidable fragments $\gmitl$ as well as $\gmtl$, but strictly less expressive than the full $\gmtlfull$. 
As our main result, we show that satisfiability of $\mtlsns$ is decidable  over
pointwise strictly monotonic time (i.e. finite timed words).
By symmetry it is easy to show that $\mtluns$ also has decidable satisfiability.
The result is established by giving a satisfiability preserving reduction from
the logic $\mtlsns$ to $\mtl$ using a novel technique of
{\em temporal projections with oversampling}. 

Temporal projection is a technique which allows  obtaining equi-satisfiable formula
in a more restricted logic by using additional auxiliary propositions. The formula transformation is carried out in a conservative fashion
so that the models of the original formula and those of the transformed formula are related in projection-embedding  fashion. This technique was used in a number of works on continuous time temporal logics to obtain  equi-satisfiable formulae with only restricted set of modalities (e.g. from $\gmtlfull$ to $\gmtl$).
\cite{fundinfo04,deepak08,PrabhakarD06,formats11}. 
In this paper, we generalize the technique to pointwise models where transformed formula is  interpreted over timed words which are ``oversampling'' of the original timed words. Thus, model embedding involves  adding intermediate ``non-action'' time points in the timed word where only the auxiliary propositions are interpreted.

Our transformation of the $\mtlsns$ formula to $\mtl$ formula relies upon the properties of the  unary modality $\past_{NS}$. In a recent work \cite{atva12},  Pandya and Shah formulated  a ``horizontal stacking'' properties of the
bounded $\past_{[l,u]} \phi$ modality: this allows the truth of  $\past_{[l,u]} \phi$ at a point in any unit interval 
to be related to the first and the last occurrences  of $\phi$ in some related unit intervals. 
Our transformation builds upon these properties to achieve elimination of past modalities using temporal projection with oversampling.

Several real-time properties can be specified using only the unary 
future $\fut_I$, and past $\past_I$ operators. As our second main contribution,
we investigate the decidability of unary fragment $\gmtlfullunary$ 
(this question was posed by A. Rabinovich in a personal communication).
We show that $\mtlfullunary$ over finite pointwise time is undecidable, whereas 
$\mtlfutpw$ over finite pointwise models already has Ackermann-hard  satisfiability checking.
Hence, restriction to unary modalities does not improve the decidability properties of $\mtlfull$.
We  compare the expressive powers of some of these fragments using the
technique of EF games for $\mathsf{MTL}$ \cite{concur11}.  
\section{Metric Temporal Logic}
In this section, we describe the syntax and semantics of $\mathsf{MTL}$ in the pointwise sense. The definitions 
below are standard.  Let $\Sigma$ be a finite alphabet of events. A finite timed word over $\Sigma$ is a sequence 
$\rho=(A_1, t_1)(A_2, t_2) \dots (A_n, t_n)$ where $A_i \subseteq \Sigma
$, $A_i \cap \Sigma \neq \emptyset$, and $t_i \in \R_{\geq 0}$
 for $1 \leq i \leq n$. Further, $t_i < t_{j}$ for all $1 \leq i < j \leq n$.
$BP$ and $EP$ are special endmarkers which hold at the start and end point 
of any timed word. We use the short form $\rho=(\sigma, \tau)$ to represent a timed word; 
$\sigma=A_1A_2 \dots A_n$ and $\tau=t_1t_2 \dots t_n$. Each point $\rho(i)$, $1 \leq i \leq n$ 
is called an action point, and let $dom(\rho)=\{1,2,\dots,n\}$ be the set of positions of $\rho$.
Let $time(\rho)$ denote the time stamp of the last action point of $\rho$. 
Let $\sigma_i=A_i$ and $\tau_i=t_i$. 
Given $\Sigma$,  the formulae of $\mathsf{MTL}$ are built from $\Sigma$, modalities $BP,EP$ using boolean connectives and 
time constrained versions of the modalities $\until$ and  $\since$ as follows: \\
$~~~~~~~~~~~~~~\varphi::=a (\in \Sigma)~|~BP~|~EP~|true~|\varphi \wedge \varphi~|~\neg \varphi~|
~\varphi \until_I \varphi~|~\varphi \since_I \varphi$\\
where  $I$ is an open, half-open or closed interval 
with end points in $\N \cup \{\infty\}$.    

\noindent{Point-Wise Semantics} : 
\label{point}
Given a finite timed word $\rho$, and an $\mathsf{MTL}$ formula $\varphi$, in the pointwise semantics, 
the temporal connectives of $\varphi$ quantify over a countable set of positions 
in $\rho$.  For an alphabet $\Sigma$, a timed word $\rho=(\sigma, \tau)$, a position 
$i \in dom(\rho)$, and an $\mathsf{MTL}$ formula $\varphi$, the satisfaction of $\varphi$ at a position $i$ 
of $\rho$ is denoted $(\rho, i) \models \varphi$, and is defined as follows:

\begin{tabular}{l c l c l c l}
$\rho, i \models a$  & \; $\leftrightarrow$ & \; $a \in \sigma_{i}$\\
$\rho,i  \models \neg \varphi$ & \; $\leftrightarrow$  & \; $\rho,i \nvDash  \varphi$\\
$\rho,i \models \varphi_{1} \wedge \varphi_{2}$  & \; $\leftrightarrow$  & \; $\rho,i \models \varphi_{1}$ 
and $\rho,i\ \models\ \varphi_{2}$\\
$\rho,i\ \models\ \varphi_{1} \until_{I} \varphi_{2}$  & \; $\leftrightarrow$  & \; $\exists j > i$, 
$\rho,j\ \models\ \varphi_{2}$, $t_{j} - t_{i} \in I$,  and  $\rho,k\ \models\ \varphi_{1}$ $\forall$ $i< k <j$\\
$\rho,i\ \models\ \varphi_{1} \since_{I} \varphi_{2}$  & \; $\leftrightarrow$  & \; $\exists\ j < i$, 
 $\rho,j\ \models\ \varphi_{2}$, $t_{i} - t_{j} \in I$,  and  $\rho,k\ \models\ \varphi_{1}$ $\forall$  $j<k < i$\\
\end{tabular}

$\rho$ satisfies $\varphi$ denoted $\rho \models \varphi$ 
iff $\rho,0 \models \varphi$. Let $L(\varphi)=\{\rho \mid \rho, 0 \models \varphi\}$. 
The set of all timed words over $\Sigma$ is denoted $T\Sigma^*$. 
Additional temporal connectives are defined in the standard way: 
we have the constrained future and past eventuality operators $\fut_I a \equiv true \until_I a$ and 
$\past_I a \equiv true \since_I a$, and their duals  
$\Box_I a \equiv \neg \fut_I \neg a$, 
$\boxminus_I a \equiv \neg \past_I \neg a$. Weak versions of  operators 
are defined as : $\widetilde{\fut} a=a \vee \fut a, 
\widetilde{\Box} a= a \wedge \Box a$, $a \widetilde{\until} b= b \vee (a \until b)$.
We denote by $\mathsf{MTL}^{pw}[W]$ the 
class of $\mathsf{MTL}$ formulae interpreted in the pointwise semantics
having modalities $W \subseteq \{\fut_I$ or $\until_I, \past_I$ or $\since_I\}$.
If $I$ is a non-singular interval of the form $\langle a, b \rangle$ with $a \neq b$ 
then we denote the modalities by $\fut_{NS}$ ($\until_{NS}$), and $\past_{NS}$ ($\since_{NS}$). 

\section{ Decidability of $\mtlsns$}
In this section, we show that the class $\mtlsns$ is decidable, by giving a satisfiability preserving 
reduction  to $\mtl$.  Two important transformations needed in this reduction
are flattening of the formula (which removes nesting of temporal operators using
auxiliary propositions) and oversampling closure which makes the
truth of the formula invariant under insertion of additional time points (oversampling).
\oomit{Given  $\varphi \in \mtlsns$ over $\Sigma$, 
we synthesize a formula $\varphi_U \in \mtl$ over $\Sigma' \supseteq \Sigma$ 
such that $\varphi$ is satisfiable iff $\varphi_U$ is satisfiable.}

 \subsection{Temporal Projections}
\label{remove-past}
Let $\Sigma' \supseteq \Sigma$.  
Consider a word $\rho'=(Y_1,0)(Y_2,t'_2) \dots (Y_m,t'_m)$ in $T\Sigma'^*$.  
Then, $\rho' \upharpoonright \Sigma$ is the timed word 
 $\rho \in T\Sigma^*$ obtained by the steps (E1) followed by (E2): \\ 
 (E1) Erase all symbols of $\Sigma' \backslash \Sigma$ from $\rho'$. 
  Call the resultant word $\rho''$.\\
  (E2) Erase all symbols of the form $(\emptyset, t_i)$ from $\rho''$, to obtain $\rho$.
Given $\rho'$ as above, a point $\rho'(i)$ is called 
an {\bf action-point} iff $Y_i \cap \Sigma \neq \emptyset$. 
For $\rho=(X_1,0)(X_2,t_2)\dots(X_n,t_n)$ a timed word in $T\Sigma^*$,
$\rho'$ is called a $\Sigma'$-{\bf oversampling} of $\rho$, denoted by $\rho' \Downarrow^{\Sigma}_{\Sigma'}=\rho$
iff (i) $\rho' \upharpoonright \Sigma = \rho$, and (ii) $1,m$ are action points.  
In this case $\rho$ is called a projection of 
$\rho'$. A $\Sigma'$-oversampling $\rho'$ is called a {\it \bf simple extension} of $\rho$, 
denoted by $\rho' \downarrow^{\Sigma}_{\Sigma'}=\rho$ iff $m=n$. 
For a $\Sigma'$-oversampling $\rho'$ of $\rho$, we  define a map $g$ from the action points of $\rho'$ 
 to the points of $\rho$ as follows: 
 $g(\rho'(1))=\rho(1)$.  For $1 < i < m$, $g(\rho'(i))=\rho(j)$ iff 
 (i) $Y_i \cap \Sigma= X_j$, $t'_i=t_j$, and (ii) For some $h >0$, $g(\rho'(i-h))=\rho(j-1)$ iff 
 for all $i-h < r < i$, $\rho'(r)$ are non-action points, and $Y_{i-h}\cap \Sigma=X_{j-1}$. 
For simple extension, this mapping is the identity. 

The above notions can be extended to timed languages. For $L \subseteq T\Sigma^*$ and $L' \subseteq T\Sigma'^*$,
$L'\Downarrow^{\Sigma}_{\Sigma'}=L$ iff for every $\rho' \in L'$, we have some $\rho \in L$ such that 
$\rho'\Downarrow^{\Sigma}_{\Sigma'}=\rho$, and for every $\rho \in L$, there exists some $\rho' \in L'$
such that $\rho'\Downarrow^{\Sigma}_{\Sigma'}=\rho$.

\begin{example}
 Let $\Sigma=\{a,b\}$ and let $\Sigma'=\{a,b,c\}$. 
 A $\Sigma'$-oversampling of \\
 $\rho=(\{a\},0)(\{a,b\},0.3)(\{a\},4.8)$
is $\rho_1=(\{a,c\},0)(\{c\},0.1)(\{a,b\},0.3)(\{a,c\},4.8)$. 
$\rho_2=(\{a,c\},0)(\{b,c\},0.1)(\{a,b\},0.3)(\{a,c\},4.8)$ is not a $\Sigma'$-oversampling of $\rho$. 
$\rho_1 \upharpoonright \{a,b\}=\rho$, 
while $\rho_2 \upharpoonright \{a,b\} \neq \rho$.
Neither $\rho_1$ nor $\rho_2$ is simple extension of $\rho$. Also, $g(\rho_1(1))=\rho(1), 
g(\rho_1(3))=\rho(2), g(\rho_1(4))=\rho(3)$.
 \end{example}

 \noindent{\bf Flattening} Let $\varphi \in \mtlfull$ over $\Sigma$.
Given any sub-formula $\psi_i$ of $\varphi$, 
and a fresh symbol $a_i$, 
the equivalence $X_i=\widetilde{\Box}(\psi_i  \Leftrightarrow a_i)$
is called a temporal definition and  $a_i$ is called a witness. 
Let $\psi=\varphi[a_i /\psi_i]$ be the formula 
obtained by replacing all occurrences of $\psi_i$ in $\varphi$, with the witness $a_i$. 
Flattening is done recursively until we have replaced all future/past modalities of interest
with witness variables, obtaining $\varphi_{flat}=\psi \wedge X$, where $X$ is the conjunction 
of all temporal definitions. Let $\Sigma'$ be the 
set of auxiliary witness propositions used in $X$, and let $\Delta=\Sigma \cup \Sigma'$. 
Let $act=\bigvee_{a \in \Sigma}a$. 
\begin{lemma}
 \label{lem:flat}
Let $\varphi \in \mtlfull$ over $\Sigma$ and
let $\varphi_{flat}$ be the flattened formula obtained from $\varphi$.
Note that  $\varphi_{flat}$ is over $\Delta$.
Let $\rho$ be a timed word in $T\Sigma^*$. 
 For any $\Delta$-extension $\rho'$ of $\rho$ such that 
 $\rho'\downarrow^{\Sigma}_{\Delta}=\rho$, 
 we have $\rho \models \varphi$  iff $\rho' \models \varphi_{flat}$.  
 \end{lemma}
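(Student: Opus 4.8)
The plan is to prove the two implications separately, exploiting the fact that a \emph{simple} extension leaves the underlying metric and positional structure of the word untouched. By definition $\rho' \downarrow^{\Sigma}_{\Delta} = \rho$ forces $m = n$, so $\rho'$ has exactly the same points and timestamps as $\rho$, each point carrying the same $\Sigma$-label as in $\rho$ together with some added witnesses from $\Sigma' = \Delta \setminus \Sigma$; the map $g$ is the identity. Consequently, for every position $i$ and every interval $I$ the set of positions $j$ with $t_j - t_i \in I$ is literally the same in $\rho$ and $\rho'$, so the modalities $\until_I$ and $\since_I$ quantify over identical positions in both words. The only possible discrepancy between $\rho, i \models \cdot$ and $\rho', i \models \cdot$ is therefore purely propositional and is governed entirely by how the witnesses are valued.

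The heart of the argument is that the conjunct $X = \bigwedge_i X_i$, with $X_i = \widetilde{\Box}(\psi_i \Leftrightarrow a_i)$, when evaluated at the initial point forces $a_i \Leftrightarrow \psi_i$ to hold at \emph{every} position of $\rho'$ (this is exactly the semantics of $\widetilde{\Box}$, asserting its body now and at all later points). Call such an extension \emph{canonical}. I would then prove, by induction ordered from innermost to outermost temporal definitions, the claim: if $\rho'$ is canonical then for every position $p$ and every witness $a_i$ one has $\rho', p \models a_i$ iff $\rho, p \models \Psi_i$, where $\Psi_i$ is the original $\Sigma$-subformula that $a_i$ abbreviates. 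In the base case $\psi_i$ contains only $\Sigma$-symbols, so $\psi_i = \Psi_i$ and the equivalence follows from canonicity together with the agreement of $\rho'$ and $\rho$ on $\Sigma$ and on time. In the inductive case $\psi_i$ may contain inner witnesses $a_j$; here I combine the induction hypothesis ($a_j \equiv \Psi_j$) with the positional identity of the first paragraph to evaluate the temporal operators of $\psi_i$, obtaining that $\psi_i$ in $\rho'$ matches $\Psi_i = \psi_i[\Psi_j / a_j]$ in $\rho$, and then apply canonicity $a_i \Leftrightarrow \psi_i$. Lifting this from witnesses to arbitrary flattened subformulas gives: for canonical $\rho'$, $\rho', i \models \theta$ iff $\rho, i \models \bar{\theta}$, where $\bar{\theta}$ back-substitutes every witness; in particular $\bar{\psi} = \varphi$.

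With the claim in hand the backward direction is direct: if $\rho' \models \varphi_{flat}$ then $\rho', 0 \models X$, so $\rho'$ is canonical, and $\rho', 0 \models \psi$ yields $\rho, 0 \models \varphi$, i.e. $\rho \models \varphi$. For the forward direction, given $\rho \models \varphi$ I would build the extension explicitly: retain all points and timestamps of $\rho$ and, respecting the bottom-up dependency, add $a_i$ to the label at each position $p$ exactly when $\rho, p \models \Psi_i$. This $\rho'$ is by construction a simple extension with $\rho' \downarrow^{\Sigma}_{\Delta} = \rho$, is canonical, hence satisfies $X$, and by the claim satisfies $\psi$ at $0$; thus $\rho' \models \varphi_{flat}$. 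Since canonicity pins the witness valuation uniquely, this $\rho'$ is the only simple extension satisfying $\varphi_{flat}$, which is the precise sense in which the equivalence is to be read.

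I expect the main delicate point to be the recursive, nested nature of flattening: a body $\psi_i$ may itself contain witnesses $a_j$ of inner definitions, so the equivalences in $X$ must be seen to reinforce one another consistently. This is exactly what the innermost-to-outermost induction handles, using that the single conjunction $X$ asserts all equivalences simultaneously at all positions while the $\widetilde{\Box}$ at the initial point propagates each equivalence everywhere. A secondary point to get right is the treatment of the initial evaluation point $\rho, 0$ and the endmarkers $BP, EP$ under the extension, and the verification that adding only symbols of $\Sigma' = \Delta \setminus \Sigma$ (with no new time points, since $m = n$) alters neither the $\Sigma$-projection nor the positions reachable by the metric constraints. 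Everything else reduces to a routine structural induction once this positional identity of simple extensions is established.
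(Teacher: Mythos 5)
Your proof is correct, but there is nothing in the paper to compare it against: the authors state Lemma \ref{lem:flat} without any proof, invoking flattening as a standard technique (the only related argument they supply is for the oversampling variant, Lemma \ref{lem:gen}, in Appendix \ref{proof:gen}, where all the extra work concerns the non-action points that your simple-extension setting avoids). Your write-up is the argument they implicitly rely on: a simple extension preserves positions and timestamps, so $\until_I$ and $\since_I$ quantify over identical position sets in $\rho$ and $\rho'$; the conjunct $X$ evaluated at the initial point forces each equivalence $a_i \Leftrightarrow \psi_i$ at every position; and an innermost-to-outermost induction then identifies each witness with the $\Sigma$-formula it abbreviates. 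Beyond filling this gap, you correctly flag and repair a genuine imprecision in the statement itself: as literally written (``for \emph{any} $\Delta$-extension $\rho'$ with $\rho'\downarrow^{\Sigma}_{\Delta}=\rho$, $\rho \models \varphi$ iff $\rho' \models \varphi_{flat}$''), the forward direction is false, since a simple extension that mislabels even one witness falsifies $X$ while $\rho \models \varphi$ still holds. Your existential reading --- $\rho \models \varphi$ iff \emph{some} (equivalently, the unique canonical) simple extension satisfies $\varphi_{flat}$, and conversely every extension satisfying $\varphi_{flat}$ projects to a model of $\varphi$ --- is exactly the form in which the lemma is used downstream (Lemma \ref{lem:flatgen} and the equisatisfiability claims of Section \ref{past-elim}), so your proof establishes the statement the paper actually needs.
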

 
 \begin{lemma}[oversampling closure]
\label{lem:gen}
Let $\varphi \in \mtlfull$ over $\Sigma$ and $\Sigma' \supseteq \Sigma$.
Then $L(\varphi)=L(\hat{\varphi})\Downarrow^{\Sigma}_{\Sigma'}$ where 
$\hat{\varphi}=\varphi' \wedge ((BP \Rightarrow act) \wedge (EP \Rightarrow act))$, and $\varphi'$ is obtained 
from $\varphi$ by replacing recursively all subformulae of the form $(a_i \until_I a_j)$ [$(a_i \since_I a_j)$] with 
  $(act \Rightarrow a_i) \until_I (a_j\wedge act)$ [$(act \Rightarrow a_i)\since_I (a_j \wedge act)$]. 
\end{lemma}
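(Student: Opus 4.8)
The plan is to prove, by structural induction on $\varphi$, a stronger statement that relates truth at matching positions: for every $\Sigma'$-oversampling $\rho'$ of $\rho$ with associated map $g$, and for every action point $\rho'(i)$ with $g(\rho'(i))=\rho(j)$, we have $\rho',i \models \varphi'$ iff $\rho,j \models \varphi$. The two language inclusions demanded by the definition of $\Downarrow^{\Sigma}_{\Sigma'}$ then follow by instantiating this equivalence at the initial point. Before the induction I would record the structural facts about $g$ on which everything rests: $g$ is an order-preserving bijection from the action points of $\rho'$ onto the positions of $\rho$ which preserves the $\Sigma$-projected labels ($Y_i \cap \Sigma = X_j$ whenever $g(\rho'(i))=\rho(j)$) and, crucially, the time stamps ($t'_i = t_j$ for such $i,j$). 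These follow directly from the definitions of oversampling and of $g$, together with the observation that every non-action point of $\rho'$ is erased by steps (E1)--(E2) and so does not appear in $\rho$.

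The base cases ($a\in\Sigma$, $true$, and the endmarkers $BP,EP$) are immediate: an event symbol of $\Sigma$ holds at $\rho'(i)$ iff it holds at $\rho(j)$ because $g$ preserves $\Sigma$-labels, while $BP$ and $EP$ hold precisely at the first and last points, which are action points identified by $g$. Boolean connectives are routine since the transformation commutes with them. The heart of the argument is the $\until_I$ case, the $\since_I$ case being symmetric. Here $\varphi' = (act \Rightarrow \psi_1') \until_I (\psi_2' \wedge act)$. The conjunct $\wedge\, act$ forces the witnessing position of $\rho'$ to be an action point, say $\rho'(i')$ with $g(\rho'(i'))=\rho(j')$; since $g$ preserves time stamps, $t'_{i'}-t'_i = t_{j'}-t_j$, so the constraint $I$ transfers verbatim, and the induction hypothesis turns $\psi_2'$ at $\rho'(i')$ into $\psi_2$ at $\rho(j')$. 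For the intermediate obligation, at every inserted non-action point the guard $act$ is false, so $act \Rightarrow \psi_1'$ holds vacuously and such points impose no constraint; at the intermediate action points, which $g$ maps in an order-preserving way onto exactly the positions strictly between $j$ and $j'$, the induction hypothesis turns $\psi_1'$ into $\psi_1$. Thus $\rho',i \models \varphi'$ unwinds to the existence of a witness $j'>j$ with $t_{j'}-t_j \in I$, $\rho,j' \models \psi_2$, and $\rho,k \models \psi_1$ for all $j<k<j'$, which is exactly $\rho,j \models \psi_1 \until_I \psi_2$.

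With the induction established, both inclusions are short. For completeness, given $\rho \in L(\varphi)$ I take $\rho'=\rho$ regarded as a word over $\Sigma'$: this is the trivial oversampling in which every point is an action point and $g$ is the identity, so the claim gives $\rho' \models \varphi'$, while $(BP \Rightarrow act)\wedge(EP \Rightarrow act)$ holds because the first and last points are action points; hence $\rho' \models \hat{\varphi}$ with $\rho' \Downarrow^{\Sigma}_{\Sigma'}=\rho$. For soundness, given $\rho' \models \hat{\varphi}$, the conjunct $(BP \Rightarrow act)\wedge(EP \Rightarrow act)$ forces positions $1$ and $m$ to be action points, which is exactly the extra condition~(ii) needed for the projection $\rho = \rho' \upharpoonright \Sigma$ to witness $\rho' \Downarrow^{\Sigma}_{\Sigma'}=\rho$; applying the claim at the initial point then yields $\rho \models \varphi$. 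The main obstacle I anticipate is the bookkeeping in the $\until_I$/$\since_I$ step, namely arguing precisely that the inserted non-action points are neutralised by the $act$ guard while $g$ puts the remaining intermediate points and the timing constraint $I$ into exact correspondence with $\rho$; once the order- and time-preserving properties of $g$ are pinned down, the remainder is routine.
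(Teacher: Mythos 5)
Your proposal is correct and follows essentially the same route as the paper's own proof: a structural induction over the formula that exploits the label-, order- and time-stamp-preserving map $g$ between action points of $\rho'$ and points of $\rho$, with the $act$ guards neutralising inserted non-action points and the $\wedge\, act$ conjunct forcing witnesses to be action points. The paper only sketches this (stating the base case $a \until_I b$ and asserting the general case follows), so your version — with the explicit bijectivity of $g$, the trivial oversampling for the completeness inclusion, and the role of the $(BP \Rightarrow act)\wedge(EP \Rightarrow act)$ conjunct in securing condition (ii) of the oversampling definition — is simply a more complete write-up of the same argument.
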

\begin{proof}
In Appendix \ref{proof:gen}.
\end{proof}

 \begin{lemma}
\label{lem:flatgen}
Let $\varphi \in \mtlfull$ over $\Sigma$, and $\Sigma' \supseteq \Sigma$.
Then $L(\varphi)=L(\hat{\varphi}_{flat})\Downarrow^{\Sigma}_{\Sigma'}$, where $\hat{\varphi}_{flat}
$ is obtained from $\varphi_{flat}$ using Lemma \ref{lem:gen}. 
\end{lemma}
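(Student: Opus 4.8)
The plan is to read Lemma~\ref{lem:flatgen} as the composition of the flattening equivalence (Lemma~\ref{lem:flat}) with the oversampling closure (Lemma~\ref{lem:gen}), chaining the two projection relations through the intermediate alphabet $\Delta$ of the flattened formula (so $\Sigma\subseteq\Delta\subseteq\Sigma'$; the flattening witnesses are the symbols of $\Delta\setminus\Sigma$, and we take the oversampling alphabet $\Sigma'\supseteq\Delta$). First I would record a transitivity property of the projection maps: for $\Sigma\subseteq\Delta\subseteq\Sigma'$ and words $\rho'\in T\Sigma'^*$, $\rho_\Delta\in T\Delta^*$, $\rho\in T\Sigma^*$, the erasure operation satisfies $\rho'\upharpoonright\Sigma=(\rho'\upharpoonright\Delta)\upharpoonright\Sigma$, since deleting $\Sigma'\setminus\Delta$ and then $\Delta\setminus\Sigma$ (together with the resulting empty points) removes exactly the same symbols and points as deleting $\Sigma'\setminus\Sigma$ in one pass. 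Consequently, if $\rho'\Downarrow^{\Delta}_{\Sigma'}=\rho_\Delta$ and $\rho_\Delta\downarrow^{\Sigma}_{\Delta}=\rho$ (a \emph{simple} extension, adding no points), then $\rho'\Downarrow^{\Sigma}_{\Sigma'}=\rho$, and condition (ii) of oversampling is inherited because a simple extension keeps every point a $\Sigma$-action point, so the endpoints stay action points.

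With this in hand I would prove the two inclusions. For $L(\varphi)\subseteq L(\hat{\varphi}_{flat})\Downarrow^{\Sigma}_{\Sigma'}$: given $\rho\models\varphi$, take the simple $\Delta$-extension $\rho_\Delta$ of $\rho$ whose witnesses are fixed by the temporal definitions, so $\rho_\Delta\models\varphi_{flat}$ by Lemma~\ref{lem:flat}; then Lemma~\ref{lem:gen} supplies $\rho'\models\hat{\varphi}_{flat}$ with $\rho'\Downarrow^{\Delta}_{\Sigma'}=\rho_\Delta$, and the transitivity above yields $\rho'\Downarrow^{\Sigma}_{\Sigma'}=\rho$. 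For the converse inclusion: given $\rho'\models\hat{\varphi}_{flat}$, let $\rho_\Delta$ be the $\Delta$-word obtained by deleting only the pure oversampling symbols $\Sigma'\setminus\Delta$ and the points left with no $\Sigma$-event, and set $\rho=\rho_\Delta\downarrow^{\Sigma}_{\Delta}$. The oversampling-closure property (Lemma~\ref{lem:gen}, read with the original action alphabet $\Sigma$) gives $\rho_\Delta\models\varphi_{flat}$, Lemma~\ref{lem:flat} then gives $\rho\models\varphi$, and transitivity gives $\rho=\rho'\Downarrow^{\Sigma}_{\Sigma'}$, as required.

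The step I expect to be the main obstacle is guaranteeing that the intermediate word $\rho_\Delta$ in the converse direction is a genuine \emph{simple} extension of $\rho$ — i.e.\ that every point surviving the deletion of $\Sigma'\setminus\Delta$ still carries an original $\Sigma$-event, not merely a flattening witness from $\Delta\setminus\Sigma$. This is precisely where the two notions of ``action point'' (relative to $\Sigma$ versus relative to $\Delta$) must be reconciled. I would handle it by insisting that the oversampling closure of Lemma~\ref{lem:gen} be taken with the \emph{original} action predicate $act=\bigvee_{a\in\Sigma}a$ rather than $\bigvee_{a\in\Delta}a$, so that the flattening witnesses are treated as auxiliary propositions permitted to sit at oversampling points, and the points surviving the projection are exactly the genuine $\Sigma$-action points. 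Under this reading the transformed temporal definitions in $\hat{\varphi}_{flat}$ still pin the witnesses at every $\Sigma$-action point, so the correspondence of Lemma~\ref{lem:flat} is preserved, while leaving them free at inserted points; and the conjuncts $(BP\Rightarrow act)\wedge(EP\Rightarrow act)$ keep both endpoints $\Sigma$-action points. Verifying this $act$-alignment is the crux; the remaining bookkeeping (transitivity of $\upharpoonright$, and preservation of timestamps and ordering under the projections) is routine.
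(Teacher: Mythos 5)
Your proof is correct and is essentially the paper's own (implicit) argument: Lemma~\ref{lem:flatgen} is stated in the paper without proof, precisely as the composition of Lemma~\ref{lem:flat} with Lemma~\ref{lem:gen} chained through the intermediate alphabet $\Delta$, which is what you carry out. Your resolution of the $act$-alignment issue — relativizing to $act=\bigvee_{a\in\Sigma}a$ rather than $\bigvee_{a\in\Delta}a$, so that witness-only points count as non-action points — matches the paper's intended construction (see Example~\ref{ex-past}, where $act=a\vee c$ excludes the witnesses $\alpha,\beta$) and is indeed necessary, since under the $\bigvee_{a\in\Delta}a$ reading the stated equality would fail.
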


Given a formula $\varphi$ (over $\Sigma$) of logic $L_1$,
we can often find a formula $\psi$ (over $\Sigma'$)
of a much simpler/desirable  logic $L_2$  such that $L(\varphi) = L(\psi)\Downarrow^{\Sigma}_{\Sigma'}$.
We say that $\varphi$ is equivalent modulo temporal projections (equisatisfiable) to $\psi$. 
This is denoted $\psi \Downarrow \Sigma \equiv \varphi$. 
Example  \ref{ex-past} illustrates this.

\begin{example}
\label{ex-past}
Consider the formula $\varphi=\fut(a \wedge \past_{(1,\infty)} c)$ over
$\Sigma=\{a,c\}$. $act=a \vee c$.  Flattening $\varphi$ gives the formula 
 $\varphi_{flat}= \fut \alpha \wedge X$, with $X=\widetilde{\Box}[\alpha \Leftrightarrow (a \wedge \beta)] 
 \wedge \widetilde{\Box}[\beta \Leftrightarrow \past_{(1,\infty)} c]$.
 $\hat{\varphi}_{flat}=\fut(\alpha \wedge act) \wedge \hat{X}$, with 
  $\hat{X}=\widetilde{\Box}[act \Rightarrow (\alpha \Leftrightarrow (a \wedge \beta))] 
  \wedge \widetilde{\Box}[act \Rightarrow (\beta \Leftrightarrow \past_{(1,\infty)}(c \wedge act))]$. 
We now replace the past subformula $\widetilde{\Box}[act \Rightarrow (\beta \Leftrightarrow \past_{(1,\infty)}(c \wedge act))]$
 of $\hat{X}$ with the formula $\nu \in \mtl$: 
 $[(act \Rightarrow (\neg c \wedge \neg \beta)) \widetilde{\until}  
 [(c \wedge act) \wedge \widetilde{\Box}_{[0,1]}(act \Rightarrow \neg \beta)]] 
\wedge \widetilde{\Box}[(c \wedge act) \Rightarrow \Box_{(1,\infty)}(act \Rightarrow \beta)]$.
Then, the formula $\fut(\alpha \wedge act) \wedge \widetilde{\Box}[act \Rightarrow (\alpha \Leftrightarrow (a \wedge \beta))]   \wedge \nu$ is equivalent to $\hat{\varphi}_{flat}$. 
 \end{example}

\subsection{Elimination of Past}
\label{past-elim}
We now give a satisfiability preserving reduction from 
$\mtlsns$ to logic $\mtl$.  
Fix a formula $\phi \in \mtlsns$ over propositions $\Sigma$. 
For simplicity we assume that $S_{NS}$ only has intervals $NS$ which are left-closed-right-open, 
e.g. $[3,\infty)$ or $[5,17)$. Other forms of intervals can be 
handled similarly using the reduction given below. 
\begin{itemize}
\item Given $\phi \in \mtlsns$ we transform the formula to an equivalent
formula $\phi' \in {\mathsf MTL}^{pw}[\until_I,\past_J]$ where $J$ is either infinite (i.e. $[l,\infty)$)
or unit (i.e. $[l,l+1)$). Standard techniques (see  \cite{PrabhakarD06,formats11}) apply to give this reduction.
\item Removal of $\past_{[l,l+1)}$ modality requires us to consider behaviours
where additional non-action time points have to be added. Each occurrence of the
$\past$ operator gives its own requirement of adding time points. Hence, we consider
equisatisfiable $\hat{\varphi}_{flat}$ which is invariant under such oversampling
(by Lemma \ref{lem:flatgen}).
\item Let $\hat{\varphi}_{flat}$ over $\Sigma'$ be obtained by flattening and
oversampling closure as in Lemma \ref{lem:flatgen}. 
This formula consists of a conjunction of temporal definitions. 
Lemma \ref{remove-pastinf} below shows how  temporal definition with past operator of the form
$\widetilde{\Box}[act \Rightarrow (b \Leftrightarrow \past_{[l, \infty)}(a \wedge act))]$
can be replaced by an {\it equivalent} formula in $\mtl$.
 Similarly, Lemma \ref{past-b1} gives elimination of temporal definition involving $\past_{[l,l+1)}$ operator using an {\em equi-satisfiable} $\mtl$ formula.
\item The above constitute
the main lemmas of our proof. By repeatedly applying them we get an equi-satisfiable formula of $\mtl$.
\end{itemize}

\begin{lemma}
\label{remove-pastinf}
Consider a temporal definition 
 $\hat{X}_{[l, \infty)}=\widetilde{\Box}[act \Rightarrow (b \Leftrightarrow \past_{[l, \infty)}(a \wedge act))]$. 
 Then we can synthesize a formula $\nu \in \mtl$ equivalent to $\hat{X}_{[l, \infty)}$.
\end{lemma}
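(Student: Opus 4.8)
The plan is to exhibit an explicit $\mtl$ formula $\nu$ and prove it logically equivalent to $\hat{X}_{[l,\infty)}$ over all timed words (mirroring, and slightly generalizing, the construction sketched in Example~\ref{ex-past}). I take
\[
\nu \;=\; \underbrace{\big[(act \Rightarrow (\neg a \wedge \neg b)) \,\widetilde{\until}\, \big((a \wedge act) \wedge \widetilde{\Box}_{[0,l)}(act \Rightarrow \neg b)\big)\big]}_{\nu_1} \;\wedge\; \underbrace{\widetilde{\Box}\big[(a \wedge act) \Rightarrow \Box_{[l,\infty)}(act \Rightarrow b)\big]}_{\nu_2}.
\]
Note $\nu \in \mtl$, since $\widetilde{\until}$, $\widetilde{\Box}$, and $\Box$ are all derived from the strict future $\until_I$ and no past modality survives. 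The guiding observation is that, because the interval is right-unbounded, the predicate $\past_{[l,\infty)}(a\wedge act)$ is \emph{persistent}: at an action point $i$ it holds iff the first $a\wedge act$ point occurs at time $\le t_i - l$, so it is governed by a single monotone threshold $t_{\mathit{first}}+l$, where $t_{\mathit{first}}$ is the time of the first $a\wedge act$ point. This threshold behaviour is exactly what lets a purely future formula capture the past modality; the passage from the open interval $(1,\infty)$ of the example to the closed $[l,\infty)$ merely shifts which endpoint each bounded box carries (here $\widetilde{\Box}_{[0,l)}$ and $\Box_{[l,\infty)}$).

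First I would show $\hat{X}_{[l,\infty)} \Rightarrow \nu$, assuming the biconditional inside $\hat{X}$ at every action point. For $\nu_2$: if $a\wedge act$ holds at $j$ and some later action point $i$ has $t_i-t_j\ge l$, then $j$ witnesses $\past_{[l,\infty)}(a\wedge act)$ at $i$, so $b$ holds at $i$. For $\nu_1$: every action point with time $<t_{\mathit{first}}+l$ has no past $a\wedge act$ witness at distance $\ge l$, hence carries $\neg b$; this supplies the until-prefix $(act\Rightarrow\neg a\wedge\neg b)$ strictly before the first $a$-point and the local box $\widetilde{\Box}_{[0,l)}(act\Rightarrow\neg b)$ at that point. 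When no $a\wedge act$ point exists, the weak until degenerates to the global $\widetilde{\Box}(act\Rightarrow\neg b)$, again a consequence of $\hat{X}$.

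Conversely, for $\nu\Rightarrow\hat{X}_{[l,\infty)}$, I fix an action point $i$ and establish the biconditional. If $\past_{[l,\infty)}(a\wedge act)$ holds at $i$, a witness $j<i$ with $t_i-t_j\ge l$ together with $\nu_2$ instantiated at $j$ yields $b$ at $i$. If it fails at $i$, then every $a\wedge act$ point lies at distance $<l$, whence $t_i<t_{\mathit{first}}+l$ (or there is no $a\wedge act$ point at all); here $\nu_1$ pins down the trigger position $j_0$ as the first $a\wedge act$ point and forces $\neg b$ on all action points before $j_0$ and on those within $[0,l)$ of $j_0$, which is precisely the set $\{i : t_i<t_{\mathit{first}}+l\}$, so $b$ is false at $i$. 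Points where $act$ fails are unconstrained on both sides, so the equivalence extends to them trivially.

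The main obstacle is not the strategy but the bookkeeping at the interval boundary and in the degenerate cases: one must check that $\widetilde{\Box}_{[0,l)}$ (the \emph{weak} box, which must include the anchor $j_0$ itself since $0\in[0,l)$) together with the strict until-prefix covers the threshold set $\{t_i<t_{\mathit{first}}+l\}$ with neither gap nor overlap at $t=t_{\mathit{first}}$; that the prefix's clause $\neg a$ guarantees the until triggers at the \emph{first} $a\wedge act$ point and not a later one; and that the no-$a$ case is correctly absorbed by the weak until. Persistence of the infinite-interval past eventuality is what makes this equivalence \emph{exact}, in contrast to the bounded case of Lemma~\ref{past-b1}, where a witness can expire out of the unit window and only equi-satisfiability through oversampling is attainable.
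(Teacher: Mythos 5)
Your construction and proof strategy coincide with the paper's own: your $\nu_2$ is literally the paper's $\varphi_2$, your $\nu_1$ is the until-part of its $\varphi_1$, and both directions of your argument follow the same reasoning (the persistence/threshold observation, contradiction in the forward direction, case analysis on whether $\past_{[l,\infty)}(a\wedge act)$ holds at $i$ in the converse). However, there is a genuine gap in the degenerate case where the word contains no $a \wedge act$ point at all. You claim that there ``the weak until degenerates to the global $\widetilde{\Box}(act \Rightarrow \neg b)$,'' but that is the semantics of a release-style weak until, which is \emph{not} what this paper's $\widetilde{\until}$ denotes: here $a \mathbin{\widetilde{\until}} b$ is defined as $b \vee (a \until b)$, i.e.\ the reflexive until, which still requires its right argument to hold now or at some future point. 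Consequently your $\nu_1$ forces the existence of an $a \wedge act$ point, whereas $\hat{X}_{[l,\infty)}$ is satisfied, for instance, by any word whose action points all carry $\neg a \wedge \neg b$. So the $\nu$ you wrote is strictly stronger than $\hat{X}_{[l,\infty)}$, and the claimed equivalence fails on exactly those words. (The slip is understandable: Example~\ref{ex-past}, which you modelled your formula on, also omits this case, but there the surrounding formula $\fut(a \wedge \past_{(1,\infty)}c)$ forces a witness to exist; the lemma demands equivalence of $\nu$ and $\hat{X}_{[l,\infty)}$ in isolation, and even for the downstream equi-satisfiability argument the conjunct must not wrongly force an $a$ to occur.)

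The repair is exactly what the paper's proof does: its $\varphi_1$ is $\widetilde{\Box}\alpha \,\vee\, \{\alpha \mathbin{\widetilde{\until}} [(a \wedge act) \wedge \widetilde{\Box}_{[0,l)}(act \Rightarrow \neg b)]\}$ with $\alpha = (act \Rightarrow (\neg a \wedge \neg b))$, where the extra disjunct $\widetilde{\Box}\alpha$ absorbs precisely the no-$a$ case. With that disjunct added (or, equivalently, with $\widetilde{\until}$ replaced by a genuinely weak until, which you would then have to define from the paper's primitives), the boundary bookkeeping you describe at $t_{\mathit{first}}+l$ is sound and the rest of your argument goes through, matching the paper's proof.
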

\begin{proof}
 Let $\alpha=(act \Rightarrow (\neg a \wedge \neg b))$.
Consider the following formulae in $\mtl$:
 \begin{enumerate}
  \item $\varphi_1: [\widetilde{\Box} \alpha \vee \{\alpha \widetilde{\until}[(a \wedge act) \wedge \widetilde{\Box}_{[0,l)}(act \Rightarrow \neg b)] \}]$
\item $\varphi_2: \widetilde{\Box}[(a \wedge act) \Rightarrow \Box_{[l,\infty)}(act \Rightarrow b)]$.
 \end{enumerate}
 Let $\nu=\varphi_1 \wedge \varphi_2$.  
 We claim that $\rho' \models \hat{X}_{(l,\infty)}$ iff $\rho' \models \nu$ for any $\rho' \in T\Sigma''^*$.
  
Assume $\rho' \models \hat{X}_{(l, \infty)}$. 
Assume to contrary that $\rho' \models \neg \varphi_1$. Then, either there is a point marked $act \wedge b$ 
 before the first occurrence of $a \wedge act$, or there 
is a point marked $act \wedge b$ in the $[0,l)$ future of the first $a \wedge act$. 
Both of these imply $\neg \hat{X}_{[l, \infty)}$ giving contradiction. Assume to contraty that  $\rho' \models \neg \varphi_2$, then 
some point $act$ in the $[l, \infty)$ future of a certain $a\wedge act$ 
is marked $\neg b$, which again contradicts $\hat{X}_{[l, \infty)}$.
 Hence $\rho' \models \nu$.  The converse can be found in Appendix \ref{pastinf-c}. 
 \end{proof}

Next, consider a past formula of the form $\past_{[l, l+1)}$.
The following lemma \cite{atva12} gives conditions under 
which a formula $\past_{\langle l, l+1 \rangle}a$ holds at 
a point $i$ with $\tau_i\in[t+l+1,t+l+2), t,l \in \N$ of a timed word. 
The truth of $\past_{\langle l, l+1 \rangle}a$ at $\tau_i$ 
depends on the first and last points marked $a$ in the intervals $[t, t+1)$ and 
$[t+1, t+2)$.   These are denoted by ${\mathcal F}^a_{[t,t+1)}$ and 
${\mathcal L}^a_{[t,t+1)}$ in the lemma.  
Figures \ref{case1}--\ref{case4} depict the
regions where $\past_{[l, l+1)}$ holds (these are denoted by $b$).

\begin{lemma}[\cite{atva12}]
\label{atva12}
 Given a timed word $\rho=(\sigma, \tau)$, integers $l, t$ and an  point $i \in dom(\rho)$.
For $\tau_i \in [t+l+1,t+l+2)$, we have
 $\rho, i \models \past_{\langle l, l+1 \rangle}a$ iff 
 \begin{itemize}
  \item $\tau_i > {\mathcal F}^a_{[t,t+1)}(\rho) \wedge \tau_i \in [t+l+1, {\mathcal L}^a_{[t,t+1)}(\rho)+l+1\rangle$, or
  \item $\tau_i > {\mathcal F}^a_{[t+1,t+2)}(\rho) \wedge \tau_i \in \langle {\mathcal F}^a_{[t+1,t+2)}(\rho)+l, t+l+2)$.
 \end{itemize}
\end{lemma}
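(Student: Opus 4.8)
The plan is to unfold the semantics of the bounded past modality and reduce the existence of a witness to a condition on a single time window that straddles the integer $t+1$. By definition, $\rho, i \models \past_{\langle l, l+1\rangle}a$ holds iff there is some $j < i$ with $\rho, j \models a$ and $\tau_i - \tau_j \in \langle l, l+1\rangle$. Since time is strictly monotonic, $j < i$ is equivalent to $\tau_j < \tau_i$, so I would translate the time-difference constraint into a constraint on the witness time: $\tau_j \in W$, where $W = \langle \tau_i - (l+1), \tau_i - l\rangle$ and the closures of the two endpoints are the swap of those of $\langle l, l+1\rangle$. The key geometric observation is that, under the hypothesis $\tau_i \in [t+l+1, t+l+2)$, we have $\tau_i - (l+1) \in [t, t+1)$ and $\tau_i - l \in [t+1, t+2)$, so $W$ contains the integer $t+1$ and splits as $W = W_L \cup W_R$ with $W_L \subseteq [t, t+1)$ and $W_R \subseteq [t+1, t+2)$. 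Thus the witness exists iff there is an $a$-point in $W_L$ or an $a$-point in $W_R$, and it remains to match each disjunct to one of the two bullets.

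For the left half, I would note that any point of $[t, t+1)$ has time strictly below $t+1 \le \tau_i - l$, so it automatically meets the upper window bound; hence the only binding constraint on $W_L$ is the lower one, $\tau_j > \tau_i - (l+1)$. Consequently an $a$-point lies in $W_L$ iff \emph{some} $a$-point of $[t, t+1)$ has time exceeding $\tau_i - (l+1)$, and this happens iff the \emph{last} such point does, i.e.\ iff $\mathcal{L}^a_{[t,t+1)}(\rho) > \tau_i - (l+1)$, equivalently $\tau_i < \mathcal{L}^a_{[t,t+1)}(\rho) + l + 1$. The side condition $\tau_i > \mathcal{F}^a_{[t,t+1)}(\rho)$ merely records that $[t,t+1)$ contains an $a$-point at all (it is automatic once such a point exists, because $\tau_i \ge t+l+1 > \mathcal{F}^a_{[t,t+1)}(\rho)$), and together with the given $\tau_i \ge t+l+1$ this yields exactly the first bullet $\tau_i \in [t+l+1, \mathcal{L}^a_{[t,t+1)}(\rho)+l+1\rangle$.

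For the right half I would argue symmetrically: every point of $[t+1, t+2)$ has time at least $t+1 > \tau_i - (l+1)$, so it automatically meets the lower window bound, leaving $\tau_j \le \tau_i - l$ (together with strictness $\tau_j < \tau_i$) as the binding constraint on $W_R$. Hence an $a$-point lies in $W_R$ iff the \emph{first} $a$-point of $[t+1, t+2)$ does, i.e.\ iff $\mathcal{F}^a_{[t+1,t+2)}(\rho) \le \tau_i - l$, equivalently $\tau_i \ge \mathcal{F}^a_{[t+1,t+2)}(\rho) + l$; combined with the strict-past requirement $\tau_i > \mathcal{F}^a_{[t+1,t+2)}(\rho)$ and the given $\tau_i < t+l+2$ this is precisely the second bullet $\tau_i \in \langle \mathcal{F}^a_{[t+1,t+2)}(\rho)+l, t+l+2)$. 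Taking the disjunction of ``witness in $W_L$'' and ``witness in $W_R$'' then gives the claimed equivalence.

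The conceptual core (splitting $W$ at $t+1$ and reducing each half to a single extremal occurrence, $\mathcal{L}$ on the left and $\mathcal{F}$ on the right) is clean; the part I expect to be most delicate is the endpoint bookkeeping hidden in the generic notation $\langle\cdot\rangle$. The open/closed type of each endpoint of $W$ is the swap of the corresponding endpoint of $\langle l, l+1\rangle$, and these types must be propagated consistently through the two inequalities above. The genuinely subtle interaction is with the strict-past requirement $j < i$: for instance at $l = 0$ the value $\tau_j = \tau_i$ is forbidden, which forces the left endpoint of the right-hand window to be \emph{open}, exactly as encoded by $\langle \mathcal{F}^a_{[t+1,t+2)}(\rho)+l$. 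To keep this transparent I would carry out the verification as the four endpoint configurations corresponding to Figures \ref{case1}--\ref{case4}, each handled by the same window argument with the appropriate closure, rather than leaving the closures implicit.
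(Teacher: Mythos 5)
Your proof is correct. Note that the paper itself offers no proof of this lemma: it is imported verbatim from \cite{atva12}, so there is no in-paper argument to compare against; your derivation therefore stands on its own, and it is sound. The window decomposition $W=W_L\cup W_R$ split at $t+1$, the observation that on $[t,t+1)$ only the lower window bound (hence the \emph{last} occurrence $\mathcal{L}^a_{[t,t+1)}$) is binding while on $[t+1,t+2)$ only the upper bound and strictness (hence the \emph{first} occurrence $\mathcal{F}^a_{[t+1,t+2)}$) are binding, and the check that strictness $j<i$ is automatic on the left half and only bites on the right half when $l=0$, together constitute a complete verification. One small interpretive remark: in the lemma as stated, the strict-past constraint at $l=0$ is carried by the \emph{explicit} conjunct $\tau_i > \mathcal{F}^a_{[t+1,t+2)}(\rho)$ rather than by the generic bracket $\langle$ in $\langle \mathcal{F}^a_{[t+1,t+2)}(\rho)+l$ (which tracks the closure of the left endpoint of the modality's interval $\langle l,l+1\rangle$); your third paragraph handles this correctly, and only the closing meta-comment attributes it to the bracket. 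This does not affect the combined condition, since $\tau_i>\mathcal{F}^a \wedge \tau_i \ge \mathcal{F}^a+l$ coincides with $\tau_i > \mathcal{F}^a + l$ when $l=0$, but it is worth stating precisely if you write out the four endpoint configurations as you propose.
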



Consider a temporal definition, 
$\hat{X}_{[l,l+1)}=\widetilde{\Box}[act \Rightarrow (b \Leftrightarrow \past_{[l, l+1)}(a \wedge act))]$ whose 
defining modality is $\past_{[l, l+1)}$.
In Lemma \ref{past-b1}, we show how to synthesize a formula $\psi \in \mtl$ which 
is equisatisfiable to $\hat{X}_{[l,l+1)}$. 
For this, we construct  an oversampling $\rho'$ of $\rho$ over an extended alphabet $\Sigma''$.

\begin{lemma}
 \label{past-b1}
 Consider a temporal definition $\hat{X}_{[l, l+1)}=\widetilde{\Box}[act \Rightarrow (b \Leftrightarrow \past_{[l,l+1)}(a \wedge act))]$. Let $a,b \in \Sigma'$. 
 Let $\Delta=\Sigma' \cup\{c,beg_b,end_b\}$. 
 Then we can synthesize $\psi \in \mtl$ over $\Delta$ such that 
 for any  $\Sigma'' \supseteq \Delta$ and $\rho' \in T\Sigma''^*$, 
 \begin{enumerate}
  \item $\rho' \models \psi \Rightarrow \rho' \models \hat{X}_{[l,l+1)}$
  \item $\rho' \models \hat{X}_{[l,l+1)} \Rightarrow \exists \rho'' \in T\Sigma''^*$ such that 
$\rho'' \models \psi$, and $\rho'' \Downarrow^{\Sigma'}_{\Sigma''}= \rho' \Downarrow^{\Sigma'}_{\Sigma''}$.
\end{enumerate}
\end{lemma}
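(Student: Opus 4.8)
The plan is to use the structural characterization of $\past_{[l,l+1)}$ from Lemma \ref{atva12} as the backbone of the construction. The key insight is that the truth of $\past_{[l,l+1)}(a\wedge act)$ at a point $i$ with $\tau_i\in[t+l+1,t+l+2)$ is governed entirely by the first and last occurrences of $a\wedge act$ in the two preceding unit windows $[t,t+1)$ and $[t+1,t+2)$. The regions where $b$ must hold therefore form intervals whose left and right endpoints are determined by these first/last witnesses shifted forward by $l$ or $l+1$. Since these endpoints are generally \emph{not} action points of $\rho'$, I cannot name them in plain $\mtl$; this is precisely why oversampling is needed. First I would introduce the fresh proposition $c$ to mark, via an oversampling $\rho''$, the shifted boundary points (i.e.\ the images of ${\mathcal F}^a$ and ${\mathcal L}^a$ translated by $+l$ and $+(l+1)$) so that the open/half-open region boundaries described in Lemma \ref{atva12} become explicit time points in $\rho''$. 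The propositions $beg_b$ and $end_b$ would then mark the left and right ends of each maximal block of consecutive $b$-points, letting me assert in $\mtl$ that $b$ holds exactly on a contiguous region delimited by a $beg_b$ and the next $end_b$.

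The construction of $\psi$ proceeds in several conjuncts. Using the constrained future modality $\until_I$ (the only metric modality available in $\mtl$), I would write formulae that: (i) force $c$ to be placed correctly relative to each first/last $a\wedge act$ occurrence in a unit window — essentially asserting that from each such witness one can reach a $c$-point exactly $l$ or $l+1$ time units later, and conversely that every $c$ is so justified; (ii) assert that the boundaries $beg_b,end_b$ align with these $c$-marked shifted endpoints according to the two cases of Lemma \ref{atva12}, respecting the closed/open nature of each interval endpoint; and (iii) propagate $b$ across each block, i.e.\ $\widetilde{\Box}[beg_b \Rightarrow (b\,\widetilde{\until}\,end_b)]$ together with dual clauses forbidding $b$ outside the marked blocks. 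Because the unit windows partition the timeline and each point $i$ lies in a unique $[t+l+1,t+l+2)$, the local case analysis of Lemma \ref{atva12} can be turned into a global $\widetilde{\Box}$-guarded conjunction. I would then verify direction (1): any $\rho'$ satisfying $\psi$ has its $b$-marking forced by the $c,beg_b,end_b$ constraints to coincide exactly with the region prescribed by Lemma \ref{atva12}, hence $\rho'\models\hat{X}_{[l,l+1)}$; here the equivalence is genuine because $\psi$ pins down $b$ uniquely given the action points.

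For direction (2) I would start from an arbitrary $\rho'\models\hat{X}_{[l,l+1)}$ and \emph{construct} the witness $\rho''$ by inserting the required non-action time points carrying $c$ (and setting $beg_b,end_b$ appropriately), then argue $\rho''\models\psi$ while $\rho''\Downarrow^{\Sigma'}_{\Sigma''}=\rho'\Downarrow^{\Sigma'}_{\Sigma''}$, the latter holding because all inserted points are non-action points that vanish under the projection $\upharpoonright\Sigma'$. The oversampling-closure of $\hat{X}_{[l,l+1)}$ (inherited from Lemma \ref{lem:flatgen}) guarantees that inserting these points does not disturb the truth of the original temporal definition at action points, so the marking $b$ already present in $\rho'$ remains consistent with the freshly placed boundaries. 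The main obstacle I anticipate is the boundary bookkeeping in (i)--(ii): correctly translating the mixed closed/open endpoints ``$[t+l+1,{\mathcal L}^a_{[t,t+1)}+l+1\rangle$'' and ``$\langle{\mathcal F}^a_{[t+1,t+2)}+l,t+l+2)$'' of Lemma \ref{atva12} into $\until_I$-expressible constraints, ensuring that a single $c$-point serves each shifted witness and that the two cases do not produce conflicting or overlapping $beg_b/end_b$ markings when the windows interact. Getting the strictness of inequalities right at the shifted endpoints — so that $b$ is asserted on a half-open rather than closed region — is the delicate part, and it is exactly where oversampling earns its keep, since only by materializing the shifted endpoints as explicit points can $\mtl$ refer to them.
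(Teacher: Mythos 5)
There is a genuine gap in your plan: you never mark the integer grid, and without it the formulas you describe in step (i) cannot be written in $\mtl$. Your construction, like the paper's, is anchored on the notions ``first $a \wedge act$'' and ``last $a \wedge act$'' \emph{in each unit window $[t,t+1)$ with integral endpoints} --- this is what Lemma \ref{atva12} speaks about. But in a pointwise word the window boundaries $t, t+1, \dots$ are in general not action points, and $\mtl$ has no past modalities, so a point cannot assert ``no other $a\wedge act$ occurred earlier in my window'': that is a backward-looking statement about an unmarked boundary. The paper resolves exactly this by spending the proposition $c$ on the integer timestamps themselves ($\varphi_1$ forces $c$ at $0, 1, 2, \dots$), and then every other formula ($\varphi_2$--$\varphi_7$, $\varphi_9$, $\varphi_{10}$) is anchored at a $c$-point, from which ``the first $a$ within $[0,1)$'' becomes future-expressible as $(act \Rightarrow \neg a)\,\widetilde{\until}_{[0,1)}(a \wedge act \wedge \dots)$. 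You instead repurpose $c$ to mark the \emph{shifted} witnesses ${\mathcal F}^a+l$, ${\mathcal L}^a+l+1$ (duplicating the role your own $beg_b, end_b$ already play), and are left with no anchor from which ``first/last in a window'' can be stated. The same problem hits your converse requirement ``every $c$ is so justified'': justification by a witness $l$ (or $l+1$) time units \emph{in the past} is again a past constraint; the paper simulates it future-only by per-window uniqueness and emptiness formulas ($\varphi_6$, $\varphi_7$) that are, once more, guarded by the grid proposition $c$. Without either the grid or an alternative future-only mechanism for excluding spurious markers, your direction (1) fails: a word could satisfy your constraints with extra $c/beg_b/end_b$ points placed arbitrarily, forcing wrong $\neg b$ (or $b$) markings.

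Two further remarks. First, your claim that $b$ must be pinned down on whole blocks, non-action points included, is stronger than what $\hat{X}_{[l,l+1)}$ requires (it constrains $b$ only at $act$-points); the paper's formulas are all relativized by $act \Rightarrow \cdot$, and your unrelativized block formula $\widetilde{\Box}[beg_b \Rightarrow (b\,\widetilde{\until}\,end_b)]$ would wrongly constrain auxiliary points. Second, the positive direction needs no oversampling at all: the paper handles it with the single formula $\varphi_8: \widetilde{\Box}[(a\wedge act) \Rightarrow \Box_{[l,l+1)}(act \Rightarrow b)]$; the entire oversampling machinery ($c$, $beg_b$, $end_b$ and the four-case analysis) is needed only to force $\neg b$ where $\past_{[l,l+1)}(a \wedge act)$ fails. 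Routing both directions through block-marking makes the delicate endpoint bookkeeping you worry about strictly harder than it needs to be.
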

\begin{proof}
Firstly, notice that if there exists $i$ in $dom(\rho)$ marked $act \land a$,  then 
 all points $j$ in $dom(\rho)$ marked $act$ such that $t_j \in [t_i+l, t_i+l+1)$ must be marked $b$. 
 This is enforced by the following formula:
\begin{itemize}
 \item $\varphi_{8}: \widetilde{\Box}[(a\wedge act) \Rightarrow \Box_{[l,l+1)}(act \Rightarrow b)]$
\end{itemize}
$\varphi_8$ enforces the direction $act \Rightarrow (\past_{[l,l+1)} (a \wedge act) \Rightarrow b)$ of $\hat{X}_{[l,l+1)}$.
  Marking points with $\neg b$ is considerably more involved.  At a time point $t_i \in [t+l+1, t+l+2)$, $t \in \N$, 
$b$ holds only if there is an $a \wedge act$ in 
  the interval $(t_i-l-1, t_i-l] \subseteq [t, t+2)$.  
  Here we exploit Lemma \ref{atva12}. But to state its conditions using only future modalities, we need auxiliary propositions $c,beg_b, end_b$  which are required to hold at some possibly non-action points. 
  Proposition $c$  marks every integer valued time point within the time span of $\rho$. 
  The following formula specifies the behaviour of $c$.  Note that $c$ is uniquely  determined by the formula.
\begin{itemize}
 \item $\varphi_1: c \wedge \widetilde{\Box}[ (c \wedge \neg EP) \Rightarrow
\Box_{(0,1)}\neg c \wedge[\fut_{[1,1]} c \vee \fut_{(0,1)}EP]]$
\end{itemize}
To see the need for $beg_b, end_b$, consider the case for some $t \in \N$, 
 where the last $act \wedge a$ in $(t,t+1]$ occurs at $u$ and the first $act \wedge a$ 
 in $[t+1, t+2)$ occurs at $v$. If $v-u >1$, then all points $act$ in $[u+l+1, v+l)$ must be marked $\neg b$. 
 See Figure \ref{case4}.
To facilitate this marking correctly, we introduce  
 a non-action point marked $end_b$  at $v+l$, and a non-action point 
 marked $beg_b$ at $u+l+1$ in $\rho'$, and 
 state that $\neg b$ holds at all action points between 
 $beg_b$ and $end_b$.  The following formulae 
 assert that $end_b$ holds at distance $l$  from the first $a$ 
 in each unit interval with integral end points. 
 The first such $end_b$ happens beyond $[0,l)$:
\begin{itemize}
 \item $\varphi_2$ : $\widetilde{\Box}[(c \wedge \widetilde{\fut}_{[0,1)}(a \wedge act))
 \Rightarrow 
 [(act \Rightarrow \neg a) \widetilde{\until}_{[0,1)} 
 ((a\wedge act) \wedge [\widetilde{\fut}_{[l,l]}end_b \vee \widetilde{\fut}_{[0,l)}EP])]]$ 
 \item $\varphi_3$: $\widetilde{\Box}_{[0,l)}\neg end_b$ (if $l\ne 0$)
\end{itemize}
  The following formulae assert that 
 that $beg_b$ holds at distance $l+1$  from the last $a$ 
 in each unit interval with integral end points. 
 The first such $beg_b$ happens beyond $[0,l+1)$:
 \begin{itemize}
  \item $\varphi_4$ : $\widetilde{\Box}([c \wedge \widetilde{\fut}_{[0,1)}(a\wedge act)] \Rightarrow$
 $~ \widetilde{\fut}_{[0,1)} \{(a\wedge act) \wedge [((act \Rightarrow \neg a
 ) \wedge \neg c)\until c] \wedge $\\
 $(\fut_{[l+1,l+1]}beg_b \vee \widetilde{\fut}_{[0,l+1)}EP)\}) $
\item $\varphi_5 : \widetilde{\Box}_{[0,l+1)}\neg beg_b$ 
 \end{itemize}
The following formulae assert that each unit interval with integral end points 
  can have atmost one $end_b$, and one $beg_b$ : if a unit interval $[t, t+1)$, with $t \in \N$ 
  has no $a$, then there is no $end_b$ in the interval $[t+l, t+l+1)$, and there is no $beg_b$
   in the interval $[t+l+1, t+l+2)$.
   \begin{itemize}
    \item $\varphi_6$ : $c \wedge \widetilde{\Box}_{[0,1)}(act \Rightarrow \neg a) 
 \Rightarrow \widetilde{\Box}_{[l, l+1)}\neg end_b \wedge  \widetilde{\Box}_{[l+1, l+2)}\neg beg_b$
\item 
$\varphi_{7}$ : $c \wedge \widetilde{\fut}_{[0,1)} x \Rightarrow 
  (\neg x \widetilde{\until}_{[0,1)}[x \wedge (\neg x \wedge \neg c)
  \until_{(0,1)} (c \vee EP)])$ for $x \in \{end_b,beg_b\}$.
 \end{itemize}
 Note that above formulae uniquely determine the points where $c, end_b,beg_b$
 must hold in $\rho'$ based on where $a$ holds in $\rho'$.
  Using these extra propositions, we now 
construct a formula which enforces the other direction 
$act \Rightarrow (b \Rightarrow \past_{[l,l+1)} (a \wedge act))$ of $\hat{X}_{[l,l+1)}$
within interval  $[t+l+1, t+l+2)$, $t \in \N$.  We sketch this proof case-wise. For 
$t \in \N$, 
 
 \noindent\underline{\bf Case 1}:  If $act \wedge \neg a$ holds throughout 
  $(t, t+2)$,   then, $\past_{[l, l+1)}(a \wedge act)$ cannot hold anywhere in $[t+l+1, t+l+2)$ 
  (if it did, then we will have an $a \wedge act$ in $(t, t+2)$). 

 \noindent \underline{\bf Case 2}: If $\neg a$ holds at all points $act$ in $(t, t+1]$, 
 and if there is an $a \wedge act$ in $(t+1, t+2)$. 
 Assume that the first $a \wedge act$ in $[t+1, t+2)$ occurs at $s=t+ 1+ \epsilon$. Then, by $\varphi_2$, 
 we have a $end_b$ at $s+l=t+1+\epsilon+l$. Also, $\neg beg_b$ holds throughout $(t+l+1, t+l+2)$. 
 $\past_{[l, l+1)}(a \wedge act)$ cannot hold at points $act$ in $[t+l+1, s+l)$, for it did, 
 then we must have an $a \wedge act$ in $(t, s)$. 
  The formula $\varphi_{9}$ considers cases 1 and 2.  
\begin{itemize}
 \item $\varphi_{9}: \widetilde{\Box}[(c\wedge \widetilde{\Box}_{[0,1)} \neg beg_b) \Rightarrow 
  ((act \Rightarrow \neg b) \wedge \neg x) \widetilde{\until} x]$ where $x=(end_b \vee c \vee EP)$.
\end{itemize}
  \noindent \underline{\bf Case 3} If $\neg a$ holds at all points $act$ in $[t+1, t+2)$, 
  and if there is an $act \wedge a$ in $[t, t+1)$.   Assume that the last $a \wedge act$ 
  in $[t, t+1)$ occurred at $t+\delta=v$, $ 0 \leq \delta < 1$. Then 
  by $\varphi_{8}$, we have $b$ holds at all points $act$ in $[v+l, v+l+1)$. 
  Also, by $\varphi_4$, $beg_b$ holds at $v+l+1$, and 
  $\neg end_b$ holds throughout $[t+l+1, t+l+2)$ by $\varphi_{6}$. 
   However, we cannot have a  $b \wedge act$ in $[v+l+1, t+l+2)$, for this would mean the presence of an $a \wedge act$ in   $(v, t+2)$.  Note that if the last $a \wedge act$ of $[t, t+1)$ is at $t$, then $beg_b$ holds at $t+l+1$.
 
  \label{figs}
 \begin{figure}[ht]
 \begin{center}
 \begin{picture}(45,8)(20,-5)
 \thicklines
 \drawline[AHnb=0,ATnb=0](-15,0)(105,0)
 \drawline[AHnb=0,ATnb=0](-15,-2)(-15,2)
 \drawline[AHnb=0,ATnb=0](105,-2)(105,2)
 \put(-16,-4){$t$}
 \put(-15,4){$c$}
 \drawline[AHnb=0,dash={1.5}0](-15,-6.5)(15,-6.5)
 \put(-15,-6.5){(}
 \put(15,-6.5){)}
 \put(0,-8.5){$\neg a$}
 \put(101,-4){$t+l+2$}
 \put(104,4){$c$}
 \drawline[AHnb=0,ATnb=0](0,-2)(0,2)
 \put(-3,-4){$t+1$}
 \put(-1,4){$c$}
 \drawline[AHnb=0,ATnb=0](16,-2)(16,2)
 \put(13,-4){$t+2$}
 \put(15,4){$c$}
 \drawline[AHnb=0,ATnb=0](85,-2)(85,2)
 \put(81,-4){$t+l+1$}
 \put(83.5,4){$c$}
 \drawline[AHnb=0,dash={1.5}0](85,-6)(103.5,-6.5)
 \put(84.5,-6.5){[}
 \put(104,-6.5){)}
 \put(90,-8.5){$\neg b$}
 \drawline[AHnb=0,ATnb=0](65,-2)(65,2)
 \put(61,-4){$t+l$}
 \put(64,4){$c$}
 \end{picture}
 \caption{Case 1}
 \label{case1}
 \end{center}
 \end{figure}

 \begin{figure}[h]
 \begin{center}
 \begin{picture}(45,9)(20,-6)
 \thicklines
 \drawline[AHnb=0,ATnb=0](-15,0)(105,0)
 \drawline[AHnb=0,ATnb=0](-15,-2)(-15,2)
 \drawline[AHnb=0,ATnb=0](105,-2)(105,2)
 \put(-16,-4){$t$}
 \put(-15,3){$c$}
 \drawline[AHnb=0,dash={1.5}0](-15.2,6)(3,6)
 \put(-15.5,6){(}
 \put(3,6){)}
 \put(-7,7.5){$\neg a$}
 \put(101,-4){$t+l+2$}
 \put(104,3){$c$}
 \drawline[AHnb=0,ATnb=0](0,-2)(0,2)
 \put(-5,-4){$t+1$}
 \put(0,3){$c$}
 \drawline[AHnb=0,ATnb=0](5,-2)(5,2)
 \put(4,-4){$s$}
 \put(5,5){$a$}
 \drawline[AHnb=0,ATnb=0](16,-2)(16,2)
 \put(13,-4){$t+2$}
 \put(15,3){$c$}

 \drawline[AHnb=0,ATnb=0](80,-2)(80,2)
 \put(77,-4){$s+l$}
 \put(79,3){$end_b$}

 \drawline[AHnb=0,dash={1.5}0](65,-7)(79,-7)
 \put(65,-7){[}
 \put(79,-7){)}
 \put(73,-10){$\neg b$}
 
 \drawline[AHnb=0,dash={1.5}0](65,7)(104,7)
 \put(65,7){(}
 \put(104,7){)}
 \put(90,10){$\neg beg_b$}

 \drawline[AHnb=0,ATnb=0](65,-2)(65,2)
 \put(61,-4){$t+l+1$}
 \put(64,3){$c$}
 \end{picture}
 \caption{Case 2}
 \label{case2}
 \end{center}
 \end{figure}

 \begin{figure}[!h]
 \begin{center}
 \begin{picture}(45,8)(20,-7)
 \thicklines
 \drawline[AHnb=0,ATnb=0](-15,0)(105,0)
 \drawline[AHnb=0,ATnb=0](-15,-2)(-15,2)
 \drawline[AHnb=0,ATnb=0](105,-2)(105,2)
 \put(-16,-4){$t$}
 \put(-15,3){$c$}
 \drawline[AHnb=0,dash={1.5}0](-2,6)(15,6)
 \put(-2,6){(}
 \put(15,6){)}
 \put(5,7.5){$\neg a$}
 \put(101,-4){$t+l+2$}
 \put(105,3){$c$}
 \drawline[AHnb=0,ATnb=0](0,-2)(0,2)
 \put(-1,-4){$t+1$}
 \put(-1,3){$c$}
 \drawline[AHnb=0,ATnb=0](-5,-2)(-5,2)
 \put(-6,-4){$v$}
 \put(-6,5){$a$}
 
 \drawline[AHnb=0,ATnb=0](16,-2)(16,2)
 \put(13,-4){$t+2$}
 \put(15,3){$c$}

 \drawline[AHnb=0,ATnb=0](71,-2)(71,2)
 \put(69,-4){$t+l+1$}
 \put(70.5,3){$c$}
 \drawline[AHnb=0,ATnb=0](93,-2)(93,2)
 \put(85,-4){$v+l+1$}
 \put(92,3){$beg_b$}
 
 \drawline[AHnb=0,dash={1.5}0](71.5,-7)(105,-7)
 \put(71,-7){[}
 \put(105,-7){)}
 \put(80,-10){$\neg end_b$}

 \drawline[AHnb=0,dash={1.5}0](93.5,6)(103,6)
 \put(93,6){[}
 \put(103,6){)}
 \put(96,8){$\neg b$}

 \drawline[AHnb=0,ATnb=0](50,-2)(50,2)
 \put(45,-4){$t+l$}
 \put(49,3){$c$}
 \drawline[AHnb=0,ATnb=0](63,-2)(63,2)
 \put(58,-4){$v+l$}
 
 \end{picture}
 \caption{Case 3}
 \label{case3}
 \end{center}
 \end{figure}

 \begin{figure}[!h]
 \begin{center}
 \begin{picture}(45,9)(20,-8)
 \thicklines
 \drawline[AHnb=0,ATnb=0](-15,0)(110,0)
 \drawline[AHnb=0,ATnb=0](-15,-2)(-15,2)
 \drawline[AHnb=0,ATnb=0](105,-2)(105,2)
 \put(-16,-4){$t$}
 \put(-15,3){$c$}
 
 \drawline[AHnb=0,dash={1.5}0](-6,-7)(6,-7)
 \put(-6,-7){(}
 \put(6,-7){)}
 \put(0,-10){$\neg a$}
 \put(99,-4){$t+l+2$}
 \put(104,3){$c$}
 
 \drawline[AHnb=0,ATnb=0](0,-2)(0,2)
 \put(-2,-4){$t+1$}
 \put(-1,3){$c$}
 
 \drawline[AHnb=0,ATnb=0](-5,-2)(-5,2)
 \put(-6,-4){$u$}
 \put(-6,3){$a$}

 \drawline[AHnb=0,ATnb=0](7,-2)(7,2)
 \put(6,-4){$v$}
 \put(6,3){$a$}
 \drawline[AHnb=0,ATnb=0](15,-2)(15,2)
 \put(12,-4){$t+2$}
 \put(14,3){$c$}

 \drawline[AHnb=0,ATnb=0](60,-2)(60,2)
 \put(55,-4){$t+l+1$}
 \put(59,3){$c$}
 
 \drawline[AHnb=0,ATnb=0](75,-2)(75,2)
 \put(73,-4){$u+l+1$}
 \put(74,3){$beg_b$}
 
 \drawline[AHnb=0,ATnb=0](94,-2)(94,2)
 \put(89,-4){$v+l$}
 \put(93,3){$end_b$}

 \drawline[AHnb=0,ATnb=0](40,-2)(40,2)
 \put(35,-4){$t+l$}
 \put(39,3){$c$}
 
 \drawline[AHnb=0,ATnb=0](50,-2)(50,2)
 \put(45,-4){$u+l$}
 \drawline[AHnb=0,dash={1.5}0](74.8,6)(92,6)
 \put(74.5,6){[}
 \put(92.6,6){)}
 \put(83,9){$\neg b$}
 \end{picture}
 \caption{Case 4}
 \label{case4}
 \end{center}
 \end{figure}

 \noindent\underline{\bf Case 4} If there is an $a \wedge act$ in both $[t+1, t+2)$ and $[t, t+1)$.
 Assume that the last $a \wedge act$ in $[t, t+1)$ is at $u=t+\epsilon$, and the first $a \wedge act$ 
 in $[t+1, t+2)$ is at $v=t+1+\kappa$, with $\epsilon, \kappa \geq 0$. 
 If $v-u \leq 1$, then
 $\epsilon \geq \kappa$, and by $\varphi_{8}$, we have $b$ holds at all points $act$ in $[t+\epsilon+l, t+l+2+\kappa)$. 
  However, if $v-u > 1$, then $\kappa > \epsilon$, and by $\varphi_{8}$, 
 $b$ holds at all points $act$ of $[u+l, u+l+1)$ and $[v+l, v+l+1)$, 
 with $u+l+1 < v+l$, In this case, all points $act$ 
 in the range $[u+l+1, v+l)$ must be marked $\neg b$. 
The following formula handles cases 3 and 4. For $x=\neg(end_b \vee c \vee EP)$,
\begin{itemize}
 \item $\varphi_{10}: \widetilde{\Box}\{(c \wedge [\neg end_b
\widetilde{\until}_{[0,1)} beg_b])
\Rightarrow \widetilde{\fut}_{[0,1)}[(beg_b \wedge 
(((act \Rightarrow \neg b) \wedge x)\widetilde{\until} (end_b \vee c \vee EP)))] \}$ 
\end{itemize}
Let $\psi= \bigwedge_{i=1}^{10}\varphi_i \in \mtl$.  

 \noindent {\bf Proof of 1.} We claim that 
 $\rho',i \models \psi$ implies $\rho',i \models \hat{X}_{[l,l+1)}$. 
   Assume that $\rho',i \models \psi$.  Let $t_i \in [t+l, t+l+1)$ for some $t \in \N$.  
  Suppose $\rho', i \nvDash \past_{[l, l+1)}(a \wedge act)$ and $\rho', i \models act$.  
  We show that $\rho', i\models \neg b$.
  
  Since $\rho', i \nvDash \past_{[l, l+1)}(a \wedge act)$, 
  all points  $act$ in $(t_i-l-1, t_i-l]$ are marked $\neg a$. 
 Note that $(t_i-l-1, t_i-l] \subset [t-1,t+1)$ with $t-1 \leq t_i-l-1 < t$, and 
 $t \leq t_i-l < t+1$. 
 \begin{enumerate}
  \item We have $\widetilde{\Box}(act \Rightarrow \neg a)$ in $(t_i-l-1, t_i-l]$. 
  Assume that there is an $a \wedge act$ in $[t-1,t)$, and 
  the last such occurs at $u \leq t_i-l-1$. Assume further 
  that there is an $a \wedge act$ in $[t,t+1)$, and the first such 
   occurs at $v > t_i-l$. 
  Then, by case 4 of the analysis, we obtain $\Box \neg b$ at all points $act$ 
  in $[u+l+1, v+l)$. Clearly, $u+l+1 \leq t_i < v+l$, hence $act \wedge \neg b$ 
  holds at $t_i$.
 \item Assume that there is no $a \wedge act$ in $[t-1,t)$, 
 but there is an $a \wedge act$ in $[t,t+1)$. 
 The first such   $a \wedge act$ occurs at $s > t_i-l$. 
  Then, by case 2 of our analysis, $\neg b$ holds at all points $act$ 
  in $[t+l,s+l)$. Clearly, $t+l \leq t_i < s+l$, hence, 
  $act \wedge \neg b$   holds at $t_i$.
  \item Assume that there is an $a \wedge act$ in $[t-1,t)$, and 
  the last such occurs at $v \leq t_i-l-1$. Further, assume there 
  is no $a \wedge act$ in $[t,t+1)$. Then, by case 3 
  of our analysis, $\neg b$ holds at all points $act$ of 
  $[v+l+1, t+l+1)$. Clearly, $v+l+1 \leq t_i < t+l+1$. 
  hence $act \wedge \neg b$ 
  holds at $t_i$.
  \item Assume that there is no $a \wedge act$ 
  in both $[t-1,t)$ as well as $[t,t+1)$. 
  In this case, by case 1, $act \wedge \neg b$ holds at all points 
  of $[t+l, t+l+1)$. Clearly, $t_i \in [t+l, t+l+1)$, 
  hence $act \wedge \neg b$   holds at $t_i$.
   \end{enumerate}
  Thus, $\rho', i \models \neg b$, and hence $\rho', i \models (act \Rightarrow 
 (\neg \past_{[l,l+1)}(a \wedge act) \Rightarrow \neg b))$. 
  
Now assume that $\rho', i \models act \wedge \neg b$. 
 We show that $\rho', i \models \neg \past_{[l,l+1)}(a \wedge act)$. 
 Suppose $\rho', i \models \past_{[l,l+1)}(a \wedge act)$. Then there is a point 
 $t \in (t_i-l-1, t_i-l]$ where $a \wedge act$ holds. 
 Then, by $\varphi_{8}$, we have $(act \Rightarrow b)$ holds 
 at all points of $[t+l, t+l+1)$. Note that $t_i \in [t+l, t+l+1)$, and henceforth
 $\rho', i \models  (act \Rightarrow b)$, which contradicts the assumption 
 we started out with. Hence, $\rho', i \models \neg \past_{[l,l+1)}(a \wedge act)$.

 Now consider the case of a point $act$ at $t_i \in [0, l)$. Clearly, for such a $t_i$, 
 $\past_{[l, l+1)}(a \wedge act)$ cannot hold.
$\varphi_2-\varphi_7$ assert that (i) there is no $end_b$ in  $[0,l)$ and there is no $beg_b$ in $[0, l+1)$, 
(ii) if in some unit interval with integral end points, 
there is no $end_b$ and $beg_b$, then in that interval all points $act$ will be marked $ \neg b$.  
Thus, in $[0,l)$ all points $act$ are marked $\neg b$.  
 At timestamps $t \ge l$, all points $act$ satisfying $\past_{[l, l+1)}(a \wedge act)$ are marked $b$
 by $\varphi_{8}$. We have thus showed $\rho',i \models \psi$ implies $\rho',i \models \hat{X}_{[l,l+1)}$. 

\noindent {\bf Proof of 2.} Assume that
 $\rho' \models \hat{X}_{[l,l+1)}$. Then 
 we can construct $\rho'' \in T\Sigma''^*$ such that 
$\rho'' \models \psi$, and $\rho'' \Downarrow^{\Sigma'}_{\Sigma''}= \rho' \Downarrow^{\Sigma'}_{\Sigma''}$.
Assume that for any $\rho' \in T\Sigma''^*$, $\rho' \models \hat{X}_{[l,l+1)}$.
Then, at any point $i$ of $\rho'$, $\rho', i \models act$ iff 
$\rho', i \models (b \Leftrightarrow (\past_{[l,l+1)}(a \wedge act))$. 

Consider the word $\hat{\rho}=\rho'\Downarrow^{\Sigma'}_{\Sigma''}$. 
$\rho' \models \hat{X}_{[l,l+1)}$ implies $\hat{\rho} \models \hat{X}_{[l,l+1)}$.
From $\hat{\rho}$, construct as given by the formulae 
$\varphi_1$ to $\varphi_7$ of Lemma \ref{past-b1}, the oversampling $\rho'' \in T\Sigma''^*$. 
That is, $\rho'' \models \bigwedge_{i=1}^7 \varphi_i \wedge \hat{X}_{[l,l+1)}$.
Now, we first show that $\rho'' \models \psi$. If not, 
then $\rho'' \models \neg \varphi_8 \vee \neg \varphi_9 \vee \neg \varphi_{10}$. 
\begin{enumerate}
 \item[(a)] Assume $\rho'' \models \neg \varphi_8$. Then, 
 there exists $i$ such that $\rho'',i \models (a \wedge act)$, and 
 $\rho'',i \models  \fut_{[l,l+1)}(act \wedge \neg b)$. 
 Let $t_j \in [t_i+l, t_i+l+1)$ be the point where $act \wedge \neg b$ holds.
 Then, we have $\rho'',j \models (act \wedge \neg b) \wedge \past_{[l,l+1)}(a \wedge act)$
 (recall that $t_i \in (t_j-l-1, t_j-l]$, and $\rho'',i \models (a \wedge act)$). That is, $\rho'',j \models act$ and 
 $\rho'',j \models (\neg b \wedge \past_{[l,l+1)}(a \wedge act))$. Hence, 
 $\rho''\models \neg \hat{X}_{[l,l+1)}$, a contradiction. 
 \item[(b)] Assume $\rho'' \models \neg \varphi_9$. Then, 
 there exists an integral point $i$ such that 
 $\rho'', i \models (c \wedge \Box_{(0,1)} \neg beg_b)$ and 
 $\rho'', i \nvDash [((act \Rightarrow \neg b) \wedge \neg x) \widetilde{\until} x]$, for 
 $x=(end_b \vee c \vee EP)$. 
 Then, there exists a $j \geq i$, such that $\rho'', j \models (end_b \vee c \vee EP)$
 and there exists $i < k < j$ such that $\rho'', k \models \neg 
 ((act \Rightarrow \neg b) \wedge \neg x)$. Consider the $j$ nearest to $i$ (first 
 point after $i$)  where $(end_b \vee c \vee EP)$ holds. Then, 
 $\rho'', k \models \neg ((act \Rightarrow \neg b) \wedge \neg x)$ for some $i < k <j$ 
 holds when $\rho'', k \models \neg (act \Rightarrow \neg b) \wedge \neg x$. That is, 
 $\rho'', k \models (act \wedge b) \wedge \neg x$.
 
 So, we have now $\rho'', i \models (c \wedge \Box_{(0,1)} \neg beg_b)$ and 
 $\rho'', k \models (act \wedge b) \wedge \neg x$. By $\varphi_4, \varphi_6$, 
 $\rho'', i \models (c \wedge \Box_{(0,1)} \neg beg_b)$ 
 implies that there is no $(a \wedge act)$ in the interval 
 $(t_i-l-1,t_i-l]$, where $t_i$ is the time stamp of $i$. 
 \begin{itemize}
  \item Assume that $\rho'', j \models end_b$. Then, 
 $t_j - t_i < 1$, and by $\varphi_2$, there exists an $a \wedge act$ at 
 $t_j-l$, and that is the first $a \wedge act$ in 
 the unit interval $[t_i-l,t_i+1-l]$. 
 Since $\neg x$ holds at $k$, and $c$ holds at $i$, we have $t_k -t_i < 1$. Also, we 
 have $t_i-l-1 < t_k-l-1 < t_i-l < t_k-l < t_j-l$, and 
 we know that there is no $a \wedge act$ in 
 $(t_i-l-1,t_i-l]$, and the first $a \wedge act$ of 
 $[t_i-l,t_i-l+1]$ occurs at $t_j-l$. Thus, there is 
 no $a \wedge act$ in $(
 t_k-l-1, t_k-l]$. So, we have 
 $\rho'', k \models [\neg \past_{[l,l+1)}(a \wedge act)] \wedge [act \wedge b]$,
 which means $\rho'', k \models \neg \hat{X}_{[l,l+1)}$, a contradiction.
 
\item Assume  that $\rho'', j \models c$. Then $t_j=t_i+1$, and 
there is no $end_b$ in $[t_i,t_i+1]$. Then, by 
$\varphi_2, \varphi_6$, there is no $a\wedge act$ in 
$[t_i-l, t_i-l+1)$. Then, in this case, there is no $a \wedge act$ 
in $(t_i-l-1,t_i-l)$ and $[t_i-l,t_i-l+1)$.  Since 
$(t_k-l-1, t_k-l] \subseteq (t_i-l-1, t_i-l+1)$, we have 
 $\rho'', k \models [\neg \past_{[l,l+1)}(a \wedge act)] \wedge [act \wedge b]$,
giving $\rho'', k \models \neg \hat{X}_{[l,l+1)}$, a contradiction.

\item Assume  that $\rho'', j \models EP$. Then $t_j-t_i<1$, 
and we have both $\neg beg_b, \neg end_b$ holding 
in $[t_i, t_j]$. Similar to the above case, we can show that 
there is no $a \wedge act$ in $(t_i-l-1,t_i-l]$ and $[t_i-l,t_i-l+1)$, and hence 
arrive at the same contradiction.
\end{itemize}
 \item[(c)] Assume $\rho'' \models \neg \varphi_{10}$. This case is similar to the case 
 when $\rho'' \models \neg \varphi_{9}$.
 \end{enumerate}
So we have proved that $\rho'' \models \psi$. 
Recall that $\hat{\rho}=\rho'\Downarrow^{\Sigma'}_{\Sigma''}$, and 
$\rho''$ was constructed by adding oversampling points to 
$\hat{\rho}$. Hence, $\rho'' \Downarrow^{\Sigma'}_{\Sigma''}=\hat{\rho}=\rho'\Downarrow^{\Sigma'}_{\Sigma''}$, 
giving the proof.
\end{proof}

\begin{theorem}
\label{theo:main}
For every $\varphi \in \mtlsns$ over $\Sigma$, we can construct $\psi_{fut} \in \mtl$ 
over $\Sigma'' \supseteq \Sigma$ such that
\begin{enumerate}
\item For all $\rho' \in T\Sigma''^*$, if $\rho' \models \psi$ then $\rho' \Downarrow^\Sigma_{\Sigma''} \models \varphi$.
\item For all $\rho \in T\Sigma^*$, if $\rho \models \varphi$ then there exists
$\rho' \in T\Sigma''^*$ such that $\rho' \models \psi$ and $\rho' \Downarrow^\Sigma_{\Sigma''} = \rho$.
\end{enumerate}
\end{theorem}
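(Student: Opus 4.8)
The plan is to assemble the transformation from the pieces already developed in this section, applying them in a fixed order and then tracking the projection--embedding relation through each step. First I would put $\varphi$ into a normal form: using the standard interval-splitting reductions of \cite{PrabhakarD06,formats11} cited above, rewrite $\varphi \in \mtlsns$ into an equivalent $\phi' \in \mathsf{MTL}^{pw}[\until_I, \past_J]$ in which every past constraint $J$ is either unbounded, $[l,\infty)$, or a unit interval $[l,l+1)$. Next I would flatten $\phi'$ and take its oversampling closure, obtaining $\hat{\varphi}_{flat}$ over an extended alphabet $\Sigma'$; by Lemma \ref{lem:flatgen} this satisfies $L(\phi') = L(\hat{\varphi}_{flat}) \Downarrow^{\Sigma}_{\Sigma'}$ and, crucially, is invariant under insertion of non-action time points, so that the later oversampling steps cannot corrupt it.

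At this point $\hat{\varphi}_{flat}$ is a conjunction of a future part (already in $\mtl$) together with finitely many temporal definitions whose defining modality is a past operator, each of one of the two shapes $\hat{X}_{[l,\infty)}$ or $\hat{X}_{[l,l+1)}$. I would eliminate these one at a time. Every $\hat{X}_{[l,\infty)}$ is replaced by the logically \emph{equivalent} $\mtl$ formula $\nu$ of Lemma \ref{remove-pastinf}; being a genuine equivalence, it introduces no new points and no new propositions. Every $\hat{X}_{[l,l+1)}$ is replaced, via Lemma \ref{past-b1}, by an \emph{equisatisfiable} $\mtl$ formula using the fresh auxiliary propositions $c, beg_b, end_b$ and requiring oversampling. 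Iterating over all such definitions, with fresh copies $beg_{b_i}, end_{b_i}$ for each past definition (while the integer-marking proposition $c$ may be shared, since it is uniquely determined), yields a purely future formula $\psi \in \mtl$ over the final alphabet $\Sigma''$.

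For correctness I would prove the two stated directions by composing the guarantees of the individual lemmas. Soundness (item 1) is the easy direction: direction 1 of each of Lemma \ref{remove-pastinf} and Lemma \ref{past-b1} gives $\rho' \models (\text{piece}) \Rightarrow \rho' \models \hat{X}$, so $\rho' \models \psi$ forces $\rho' \models \hat{\varphi}_{flat}$, and projecting with Lemma \ref{lem:flatgen} then yields $\rho' \Downarrow^{\Sigma}_{\Sigma''} \models \varphi$. Completeness (item 2) runs in the opposite direction: starting from $\rho \models \varphi$ I obtain by Lemma \ref{lem:flatgen} an oversampling satisfying $\hat{\varphi}_{flat}$, and then repeatedly invoke direction 2 of Lemma \ref{past-b1} to add, for each unit-interval past definition, the non-action witness points carrying $c, beg_{b_i}, end_{b_i}$.

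The main obstacle is exactly this iterated use of Lemma \ref{past-b1}: each application only promises a model \emph{after} inserting its own oversampling points, so I must argue inductively that inserting the points for one past definition neither falsifies the pieces already installed for earlier definitions nor changes the $\Sigma'$-projection. Both requirements are discharged by the oversampling closure of Lemma \ref{lem:flatgen} (every action-level subformula is guarded by $act$, hence ignores the freshly added non-action points) together with the fact that the witnesses $c, beg_{b_i}, end_{b_i}$ are uniquely determined by the positions of the relevant $a$ and are therefore mutually consistent. Thus the successive oversamplings amalgamate into a single $\rho' \in T\Sigma''^*$ with $\rho' \Downarrow^{\Sigma}_{\Sigma''} = \rho$ and $\rho' \models \psi$, which is item 2.
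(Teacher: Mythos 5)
Your proposal is correct and follows essentially the same route as the paper: reduce to the normal form with only $\past_{[l,\infty)}$ and $\past_{[l,l+1)}$ past modalities via the cited standard techniques, flatten and take the oversampling closure (Lemma \ref{lem:flatgen}), then eliminate the two kinds of temporal definitions using Lemma \ref{remove-pastinf} and Lemma \ref{past-b1} respectively. In fact, your explicit inductive argument that the successive oversamplings required by repeated applications of Lemma \ref{past-b1} amalgamate consistently (thanks to $act$-guarding and the unique determination of $c, beg_{b_i}, end_{b_i}$) is spelled out more carefully than in the paper, which simply asserts that repeated application of the lemmas yields an equisatisfiable $\mtl$ formula.
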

\begin{proof}
Note that $\since_{NS}$ can be expressed using $\since$ and 
$\past_{NS}$ \cite{deepak08}. For instance, we can write $a \since_{[l, r)}b$ as 
$\past_{[l, r)}b ~\wedge ~(a \since b) \land \boxminus_{[0,l)}(a \wedge a \since b)$, for $r=l+1, \infty$. 
Similarly,  all intervals $\langle l, l+1 \rangle$, $\langle l, \infty \rangle$
are handled.  Further, $\since$ can be 
removed (More details can be found at Appendix \ref{since-rem}) \cite{deepak08}, \cite{formats11} to obtain an equisatisfiable $\mtl$ formula.
Also, $\past_{[l,m)}$ is equivalent to 
$\past_{[l,l+1)} \lor  \past_{[l+1,l+2)} \lor \cdots \lor \past_{[m-1,m)}$. Hence,
the only past modalities in the formulae are $\past_{[l,\infty)}$ or $\past_{[l,l+1)}$.
Lemmas \ref{remove-pastinf} and \ref{past-b1} show how these can be expressed
in $\mtl$ to obtain equisatisfiable formulae. Hence the theorem follows.
\end{proof}
By symmetry, using reflection \cite{formats11}, 
we can reduce  $\mtluns$ to $\mtlsns$.  
Appendix \ref{ex} illustrates in detail, the elimination of a past modality $\past_{[l,l+1)}a$.
\subsection{Expressiveness}
We wind up this section with a brief discussion 
about the expressive powers of logics 
$\mtlsns$ and $\mtluns$. The following lemma highlights 
the fact that even unary modalities $\fut_I ,\past_I$ with singular intervals 
are more expressive than $\until_{NS}, \since_{NS}$; likewise, 
non-singular intervals are more expressive than intervals 
of the form $[0, \infty)$.

\begin{lemma}
\label{game:proof}
(i) $\mtlfutpw \nsubseteq \mtluns$, 
 (ii) $\mtlfutp \nsubseteq \mtlsns$, and (iii) $\mitlfp \nsubseteq \mtlu$.
\end{lemma}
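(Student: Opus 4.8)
All three statements have the form $L_1 \not\subseteq L_2$, and the plan is to prove each by the Ehrenfeucht--Fra\"{\i}ss\'{e} (EF) game method for $\mathsf{MTL}$ of \cite{concur11}. For a fixed target logic $L_2$ and round bound $k$, I will exhibit a witness formula $\varphi \in L_1$ together with a family of timed word pairs $(\rho_k,\sigma_k)$ such that $\rho_k \models \varphi$, $\sigma_k \not\models \varphi$, yet Duplicator has a winning strategy in the $k$-round $L_2$-game on $(\rho_k,\sigma_k)$. Since a Duplicator win certifies that no $L_2$-formula of modal depth at most $k$ separates $\rho_k$ from $\sigma_k$, and $\varphi$ does separate them for every $k$, it follows that $\varphi$ is not equivalent to any $L_2$-formula, giving $L_1 \not\subseteq L_2$.

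The witnesses must be chosen with care, because of a future/past duality that defeats the naive candidates: an existential punctual fact such as $\fut(a \wedge \past_{[1,1]} b)$ is equivalent to $\fut(b \wedge \fut_{[1,1]} a)$, so a single ``two-event'' metric relation can always be re-read in the opposite temporal direction and thereby survive in $L_2$. The fix is to place the critical metric constraint \emph{under a negation} (a bounded universal), for which no forward existential re-reading exists. Concretely I would take: for (i), $\varphi = \fut(b \wedge \Box_{[1,1]} \neg a)$, a purely unary-future formula asserting a $b$-point with \emph{no} $a$ exactly one time unit later, whose punctuality in the future is unavailable to $\mtluns$ (whose future is only $\until_{NS}$); for (ii), $\varphi = \fut(a \wedge \boxminus_{[1,1]} \neg b)$, asserting an $a$-point with no $b$ exactly one unit earlier, a punctual past that $\mtlsns$ lacks since its past is only $\since_{NS}$; and for (iii), $\varphi = \fut(a \wedge \boxminus_{(0,1)} \neg b)$, asserting an $a$-point with no $b$ in the preceding open unit interval, a bounded metric past entirely absent from $\mtlu$ whose $\since$ carries no timing information at all. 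In each case the universal over the forbidden window cannot be anchored to a forward measurement, because there is no event in that window to start from, which is exactly why the duality collapse does not apply.

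For the word families, the uniform idea is to take $\rho_k$ and $\sigma_k$ to be long words built from many repeated ``blocks'' (whose number grows with $k$) that are pairwise indistinguishable by the modalities of $L_2$, and which differ only by an $\epsilon$-perturbation of a single event across the critical boundary (distance exactly $1$ for (i) and (ii); the $(0,1)$-past boundary for (iii)), so that $\varphi$ flips truth value while all $L_2$-measurable features are preserved. The Duplicator strategy then matches Spoiler's moves block-for-block: against a move using the weak modality of $L_2$ — the non-singular future $\until_{NS}$ in case (i), the non-singular past $\since_{NS}$ in case (ii), the unconstrained past $\since$ in case (iii) — Duplicator answers with the $\epsilon$-shifted partner, which stays inside the same open interval (or is unconstrained) and hence is admissible. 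The perturbation $\epsilon$ is chosen as a function of $k$, for instance $\epsilon = 2^{-(k+2)}$, small enough that no sequence of at most $k$ nested moves can accumulate enough slack to cross an integer or pin the forbidden punctual distance.

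The main obstacle is the Duplicator strategy in the presence of the metric modalities that $L_2$ \emph{does} possess — the full punctual $\until_I$ in cases (ii) and (iii), and the punctual $\since_I$ in case (i). Since these can in principle measure exact distances from the endpoints, a static two-word argument fails; the separation genuinely needs the $k$-indexed families, where the number of indistinguishable blocks exceeds the modal-depth budget so that a depth-$k$ formula cannot ``address'' the single perturbed event. I would formalize this as an invariant on game configurations — current positions carry equal labels, and the finite set of offset constraints imposed so far is jointly realizable with a uniform tolerance — and prove it is preserved under every legal $L_2$-move. The restriction to non-singular or unconstrained intervals on the relevant side is precisely what keeps the invariant alive, whereas the punctual or bounded-metric modality of $L_1$ breaks it, which is the conceptual reason the three separations hold.
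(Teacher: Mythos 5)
Your proposal is correct and follows essentially the same route as the paper: the paper also uses the EF games of \cite{concur11} with witness formulas of exactly your shape — an eventuality wrapping a \emph{negated} punctual or bounded modality (the paper uses $\fut_{(0,1)}\{a \wedge \neg \fut_{[1,1]}(a \vee b)\}$, $\fut\{b \wedge \neg \past_{[1,1]}(a \vee b)\}$, and $\fut_{(1,2)}[a \wedge \neg \past_{(1,2)}a]$ for (i)--(iii) respectively) — together with round-indexed pairs of words consisting of many $\delta$-spaced events that differ only by a single $\kappa$-perturbation at the critical distance-$1$ boundary, and a Duplicator strategy whose key observations (non-singular or untimed moves are mimicked block-for-block, the perturbed partner answers the dangerous singular moves) are the ones you state. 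The paper, like you, omits the full strategy verification, so your sketch is at the same level of rigor as the published argument.
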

\begin{proof}
The formula $\fut_{(0,1)}\{a \wedge \neg \fut_{[1,1]}(a \vee b)\}$
in $\mtlfutpw$ has no equivalent formula in $\mtluns$. 
Similarly, the formula  $\fut\{b \wedge  \neg \past_{[1,1]}(a \vee b)\}$
in $\mtlfutp$ has no equivalent formula in $\mtlsns$. 
The formula $\fut_{(1,2)}[a \wedge \neg \past_{(1,2)}a] \in \mitlfp$ 
has no equivalent formula in $\mtlu$. 
A proof using EF games \cite{concur11}
can be seen in Appendix \ref{game:sec}. 
\end{proof}

\section{Unary MTL and Undecidability}
We explore the unary fragment of $\mathsf{MTL}$.
In this section, we show the undecidability of satisfiability checking of 
$\mtlfullunary$ over finite timed words. The undecidability follows 
by construction of an appropriate $\mathsf{MTL}$ formula $\varphi$ simulating a deterministic k-counter counter machine ${\cal M}$ such that 
$\varphi$ is satisfiable iff ${\cal M}$ halts. We also show the non primitive recursive lower bound for satisfiability of $\mtlfutpw$ by reduction of halting problem (location reachability problem) for counter machine with increment errors 
\cite{schnobelen02}, \cite{demriL06} to satisfiability of the logic.

 A deterministic k-counter machine is a  k+1 tuple ${\cal M} = (P,C_1,\ldots,C_k)$, where 
 (i) $C_1,\ldots,C_k$ are k-counters taking values in  $\N$ (their initial values  are set to zero);
and (ii) $P$ is a finite set of instructions with labels $p_1, \dots, p_{n-1},p_n$. 
There is a unique instruction labeled HALT. For $E \in \{C_1,\ldots,C_k\}$, the instructions $P$ are of the following forms: 
 (I) $p_i$: $Inc(E)$, goto $p_j$, (II) $p_i$: If $E =0$, goto $p_j$, else go to $p_k$, (III) $p_i$: $Dec(E)$, goto $p_j$,
 and (IV) $p_n$: HALT. A configuration $W=(i,c_1,\ldots,c_k)$ of ${\cal M}$ at any point of time is given by the value of the current program counter $i$ and valuation of the counters $c_1,\ldots,c_k$. A move of (error-free) counter machine $(l,c_1,\dots,c_k) \rightarrow_{std} (l',c_1',\dots,c_k')$ denotes that configuration $(l',c_1',\dots,c_k')$ is obtained from $(l,c_1,\dots,c_k)$ by executing $l^{th} $ instruction. Subscript $std$ denotes that the
move is that of error-free counter machine. Let $(l^1,c^1_1,\ldots,c^1_k) \leq (l^2,c^2_1,\ldots,c^2_k)$ iff $l^1=l^2$ and $\forall i \in \{ 1,\ldots,k \}$, $c^1_i \leq c^2_i$ . We define a move of a {\em counter machine with increment-errors}
 $(l,c_1,\ldots,c_k) \rightarrow_{incerr} (l'',c_1'',\ldots,c_k'')$ iff $(l,c,d) \rightarrow_{std} (l',c_1',\ldots,c_k')$ and $(l',c_1',\ldots,c_k')\leq (l'',c_1'',\ldots,c_k'')$. Thus, machine may make increment error while moving to a next configuration.
 
 A counter machine whose execution follows the standard moves is called {\em Minsky Counter Machine}. 
 A counter machine whose execution follows moves with increment errors is called {\em Incrementing Counter Machine}. A computation of a counter machine
 (of given type)
 is a sequence of moves (of appropriate type) $W_0 \rightarrow W_1 \ldots \rightarrow W_m$ where
 $W_0 =(1,0,\ldots,0)$. The computation is terminating if the last configuration is a halting
 configuration, i.e. $C_m=(n,c^m_1,\ldots,c^m_k)$. A counter machine is called halting if it has a
 terminating computation.
 
 \oomit{
A computation of ${\cal M}$ is a finite sequence of configurations $C_0 C_1 \dots C_i C_{i+1} \dots$ 
where $C_j=\langle p, c, d\rangle$ is obtained from $C_{j-1}=\langle q, c', d'\rangle$ 
 by execution of the instruction indicated by $q$. The initial configuration is 
$\langle 1, 0, 0 \rangle$, and ${\cal M}$ is said to halt if the last instruction executed is the 
HALT instruction. Given ${\cal M}$, it is undecidable whether ${\cal M}$ halts \cite{minsky}.
}

\begin{theorem}[\cite{minsky}] 
\label{theo:minsky}
Whether a given $k$-counter ($k \geq 2$) Minsky machine is halting or not (equivalently
the location reachability problem) is undecidable.
\end{theorem}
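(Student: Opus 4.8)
The plan is to derive the undecidability of the halting problem for $k$-counter Minsky machines ($k \geq 2$) by a chain of reductions starting from a model already known to be undecidable, namely the Turing machine. The overall strategy is to show that a two-counter machine is Turing-complete via the simulations $\text{TM} \rightsquigarrow \text{2-stack machine} \rightsquigarrow \text{multi-counter machine} \rightsquigarrow \text{2-counter machine}$. Since the halting problem for Turing machines is undecidable, and each step in the chain preserves halting behaviour, the halting problem for $2$-counter machines follows as undecidable as well, and a fortiori for every $k \geq 2$.

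First I would observe that every Turing machine can be simulated by a machine with two unbounded stacks, by letting one stack hold the tape contents to the left of the head and the other the contents to the right, with the scanned cell at the top of one stack; each Turing step (read, write, move) becomes a finite sequence of push, pop, and top-symbol tests driven by the finite control. I would then encode each stack over a finite alphabet $\{1,\ldots,d\}$ as a natural number written in radix $d+1$: a push becomes \emph{multiply by $d+1$ and add the pushed symbol}, a pop becomes \emph{subtract the bottom digit and divide by $d+1$}, and an emptiness test becomes a test for zero. Each such arithmetic operation can be carried out by a counter machine using the two stack-counters together with an auxiliary scratch counter, yielding a machine with a fixed finite number $k$ of counters that simulates the Turing machine.

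The crux is the final step: any $k$-counter machine can be simulated by a $2$-counter machine, for which I would use G\"odel encoding. Fix distinct primes $p_1,\ldots,p_k$ and store the entire tuple $(c_1,\ldots,c_k) \in \N^k$ of counter values as the single number $n = p_1^{c_1}\cdots p_k^{c_k}$ held in the first counter, using the second counter purely as scratch space. Then $Inc(C_i)$ becomes \emph{multiply $n$ by $p_i$}, $Dec(C_i)$ becomes \emph{divide $n$ by $p_i$}, and the zero-test ``$C_i=0$?'' becomes \emph{is $n$ not divisible by $p_i$?}. Each of multiplication by a constant, division by a constant, and divisibility testing is implemented by transferring the value between the two counters while the finite control counts modulo $p_i$; these are the standard two-counter macros built only from $Inc$, $Dec$ and zero-test.

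I expect the main obstacle to be the faithful realisation of these arithmetic macros on exactly two counters: multiplication by $p_i$ requires draining the first counter into the second while repeatedly re-adding, and one must verify that the composition of these macros halts precisely when the simulated $k$-counter machine halts, with no spurious non-terminating computations introduced by the bookkeeping. Once this is checked, the equivalence with location reachability is immediate: since there is a unique instruction labelled $\mathrm{HALT}$ (namely $p_n$), the machine halts if and only if the control ever reaches location $p_n$, so the location reachability problem is undecidable for every $k \geq 2$.
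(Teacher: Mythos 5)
Your proposal is correct, but note that the paper offers no proof of this theorem at all: it is stated as a classical result with the proof deferred entirely to the cited reference \cite{minsky}. Your reconstruction --- Turing machine to two-stack machine, stacks encoded as base-$(d+1)$ numbers manipulated by counters, then the prime-power (G\"odel) encoding collapsing $k$ counters to two, with halting equivalent to reaching the unique $\mathrm{HALT}$ location --- is precisely the classical argument of that reference, so it matches the approach the paper implicitly relies on.
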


\begin{theorem}[\cite{schnobelen02,demriL06}]
\label{theo:schnobelen}
 Whether a given $k$-counter incrementing machine is halting or not (equivalently
the location reachability problem) is decidable with non primitive recursive complexity.
\end{theorem}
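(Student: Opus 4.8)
The plan is to reconstruct the two halves of this (classical) statement separately: decidability, which I would obtain from the theory of well-structured transition systems (WSTS), and the matching non-primitive-recursive complexity, which splits into an Ackermannian upper bound and a non-primitive-recursive lower bound.

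For decidability I would first observe that the configuration space, ordered componentwise on the counters within each fixed location (and left incomparable across distinct locations), is a well-quasi-order by Dickson's lemma. The key step is then to show that the step relation $\rightarrow_{incerr}$ is upward compatible in the \emph{forward} direction. Here the crucial feature of the model does the work: every move is a standard step followed by an arbitrary increment error, so even a move guarded by a zero test ``$E=0$'' produces, after the error, the full upward closure in the $E$-coordinate. Hence the zero tests are ``washed out'' by the subsequent gains, and the reachability set $\mathrm{Post}^*(C_0)$ of the initial configuration $C_0=(1,0,\ldots,0)$ is upward-closed, and—being a subset of a well-quasi-order—finitely generated. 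Location reachability (equivalently halting) then reduces to computing a finite basis of $\mathrm{Post}^*(C_0)$ by forward saturation and testing whether it meets $\{p_n\}\times\mathbb{N}^k$. Termination of the saturation follows from the ascending chain condition for upward-closed sets over a well-quasi-order, and each round is effective because the image of a finitely generated upward-closed set under increment, decrement, and (gain-closed) zero-test instructions is again finitely generated and computable.

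To upgrade decidability to the stated complexity from above, I would bound the number of saturation rounds, equivalently the length of the controlled bad sequences over $\mathbb{N}^k$ that the algorithm can produce. Invoking the length-function theorems for Dickson's lemma yields an Ackermannian (level $\mathbf{F}_\omega$) upper bound, which is exactly ``non primitive recursive'' from above.

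The lower bound is where the real work lies and is, I expect, the main obstacle. The plan is to exhibit a reduction forcing an incrementing machine to carry out computations of non-primitive-recursive length. Following \cite{schnobelen02,demriL06}, I would encode a Hardy/Ackermann-style computation so that the machine weakly computes a budget $A(n)$ and then uses that budget to simulate and certify a run of an ordinary error-free Minsky machine of \cite{minsky}, arranging that the incrementing machine halts precisely when the simulated machine reaches its target within the budget. The delicate point is robustness against the errors: since increments may occur spuriously at any step, the encoding must be designed so that an error can never shorten or otherwise ``cheat'' the budget computation—typically by making the certification re-derive the resources it consumes, so that any spurious increment is either harmless or produces a detectable inconsistency that blocks halting. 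Engineering this error-resilience, rather than the WSTS decidability argument, is the crux of the proof.
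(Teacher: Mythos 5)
The first thing to say is that the paper does not prove this statement at all: Theorem~\ref{theo:schnobelen} is imported from \cite{schnobelen02} and \cite{demriL06} and used purely as a black box, in combination with Lemma~\ref{nprdec:proof}, to conclude that satisfiability of $\mtlfutpw$ is non-primitive recursive. So your proposal is not an alternative to an argument in the paper --- there is none to compare against --- but a reconstruction of the cited classical result, and the route you take is essentially that of the cited literature: a wqo-based saturation argument for decidability, length-function theorems for Dickson's lemma for the Ackermannian upper bound, and a ``compute a budget, simulate, then certify'' encoding for the non-primitive-recursive lower bound, whose error-resilience engineering you correctly identify as the crux and which is precisely what \cite{schnobelen02} and \cite{demriL06} supply. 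What the paper's treatment buys is brevity (the result is not its contribution); what yours buys is a self-contained argument, and at the level of detail given it is sound.

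One correction to how you phrase the decidability half: the relation $\rightarrow_{incerr}$ is \emph{not} upward compatible in the strict WSTS sense, because a larger configuration cannot in general simulate a zero test. If $(l,\bar c)\leq(l,\bar c\,')$ with $c_E=0<c'_E$, then the move of $(l,\bar c)$ through ``if $E=0$ goto $p_j$'' sends it to location $p_j$, whereas $(l,\bar c\,')$ is forced into the else-branch $p_k$, with or without subsequent gains; so the generic WSTS machinery does not apply off the shelf. What saves your argument --- and what your own parenthetical reasoning in fact uses --- is that since errors are applied \emph{after} the standard move, the one-step successor set of every configuration is upward closed. Hence $\mathrm{Post}$ of a finitely-based upward-closed set is again upward-closed, finitely based and computable (instruction by instruction), the set $\mathrm{Post}^+(C_0)$ of configurations reachable in at least one step is upward closed (note that $\mathrm{Post}^*(C_0)$ itself is not, since it contains the singleton $C_0$, which must be handled separately), and the forward saturation $U_{i+1}=U_i\cup\mathrm{Post}(U_i)$ terminates by the ascending chain condition for upward-closed subsets of the wqo $\{p_1,\ldots,p_n\}\times\mathbb{N}^k$. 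Stated this way, your decidability proof stands. A final nuance worth making explicit: both your $F_\omega$ upper bound and the non-primitive-recursive lower bound require the number of counters $k$ to be part of the input (for each fixed $k$, Dickson length functions, and hence the problem, are primitive recursive); this matches the theorem statement and the paper's use of it, since the reduction of Lemma~\ref{nprdec:proof} is uniform in $k$.
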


\subsection*{Encoding Minsky Machines in $\mtlfullunary$}
\label{minsky}
We encode each computation of a k-counter machine ${\cal M}$ using 
(a non-empty set of equivalent) timed words over the alphabet $\Sigma_{\cal M}=\{b_1, b_2, \dots, b_n, a\}$. 
The timed language $L_ {\cal M}$ over $\Sigma_{\cal M}$ contains one and only one timed word 
corresponding to unique halting computation of  ${\cal M}$. We then generate a 
formula $\varphi_{\cal M}$  such that $L_ {\cal M}=L(\varphi_{\cal M})$.  
The encoding is done in the following way: A configuration $\langle i,c_1,\ldots,c_k \rangle$  is represented by a sub-string with untimed part 
$b_i a^{c_1} a^{c_2} \ldots a^{c_k}$. A computation of ${\cal M}$ is encoded by concatenating sequences of 
individual configurations. We encode the $j^{th}$ configuration of ${\cal M}$ in the 
time interval $[(2k+1)j,(2k+1)(j+1))$ as follows: 
For $j \in \N$,\\
(i) $b_{i_{j}}$ (representing instruction $p_{i_j}$) occurs at time $(2k+1)j$; 
(ii) The value of counter $C_q$, $q \in \{1,2,\dots,k\}$, in the $j^{th}$ configuration is given by the number of $a$'s 
in the interval $((2k+1)j+2q-1,(2k+1)j+2q)$; (iii) The $a$'s can appear only in the intervals 
$((2k+1)j+2q-1,(2k+1)j+2q)$, $q \in \{1,2,\dots,k\}$, and (iv) The intervals 
$((2k+1)j+2w,(2k+1)j+2w+1)$, $w \in\{0,\ldots,k\}$ have no events.\\
The computation must start with initial configuration and the final configuration must be
the $HALT$ instruction; beyond this, there are no more instructions.
\oomit{
A timed word $\rho=(\sigma,\tau)$ is in $L_{\cal M}$ iff it follows the above specifications.
We now proceed to give the formula $\varphi_{\cal M}$ over $\Sigma_{\cal M}$ which describes $L_{\cal M}$.
}  
$\varphi_{\cal M}$ is obtained as a conjunction of several formulae. Let $B$ be a shorthand for $\bigvee_{i\in {1,\ldots,n}}b_{i}$. 
We first give some generic formulae which hold for both Minsky and Incrementing machines.\\
\noindent 1. The symbol $b_{i_j}$ representing instruction $p_{i_j}$ occurs at $(2k+1)j$ for all $j \in \N$:\\
$
~~~~~~~~~~~~~~~~~\varphi_{0}\ =\ b_1\ \wedge \widetilde{\Box}\{(B \wedge \fut B) \Rightarrow \fut_{[2k+1,2k+1]}B\} \wedge \widetilde{\Box}\{B \Rightarrow (\neg \fut_{(0,2k+1)} B) \}.  \\
$
\noindent 2. No events in intervals $((2k+1)j+2w,(2k+1)j+2w+1)$, $w \in \{0,\dots,k\}$, $j \in \N$.\\
$
~~~~~~~ \varphi_{1}\ =\ \widetilde{\Box}\{B\ \Rightarrow\ \bigwedge_{w \in \{0,\ldots,k\}} (\Box_{[2w,2w+1]}(\neg a)). 
$\\
\noindent 3. Beyond $p_n$=HALT, there are no instructions: $\varphi_{2}\ =\  \widetilde{\Box}(b_n \Rightarrow 
\Box_{[2k+1, \infty)} false)\\
$
\noindent 4. Computation starts in $(1,0,\ldots,0)$ : $\varphi_4\ =\ b_1 \wedge \Box_{(0,2k+1)}false\\
$
\noindent 5. At any point of time, exactly one event takes place. Events have distinct time stamps.\\
$
 ~~~~~~~~~~~~~\varphi_5\ =\ [\bigwedge_{y \in \Sigma_{\cal M}}(y \Rightarrow
  \neg(\bigvee_{x \in \Sigma_{\cal M} \setminus \{y\}}(x)) ]\\ 
$
\noindent 6. Eventually we reach the halting configuration $\langle p_n,c_1,\ldots,c_k \rangle$: $\varphi_6 = \widetilde{\fut} b_n\\
$
\noindent 7. We define macros, $COPY_i$,$INC_i$,$DEC_i$ for counter $C_i$.

\begin{itemize}
 \item $COPY_i$: Every $a$ occurring in the interval $((2k+1)j+2i-1,(2k+1)j+2i)$ has a copy at a future distance $2k+1$, 
and every $a$ occurring in the next interval has an $a$ at a past distance $2k+1$. This ensures the absence 
of insertion errors.\\
   $COPY_i = \Box_{(2i-1,2i)}[(a \Rightarrow \fut_{[2k+1,2k+1]} a)] \wedge \Box_{((2k+1)+2i-1,(2k+1)+2i)}[(a\Rightarrow \past_{[2k+1,2k+1]} a)]$. 
 
\item $INC_i$: All $a$'s in the current configuration are copied to the next, 
at a future distance $2k+1$; every $a$ except the last, in the next configuration 
has an $a$ at past distance $2k+1$.
  \begin{quote}  
 $INC_i = 
 \begin{array}[t]{l}  
   (\Box_{(2i-1,2i)}\{ ~[a \Rightarrow \fut_{[2k+1,2k+1]} a] \\
   \wedge [(a \wedge \neg \fut_{(0,1)}a) \Rightarrow (\fut_{(2k+1,2k+2)}a 
   \wedge \Box_{(2k+1,2k+2)}(a \Rightarrow \Box_{(0,1)}(false)))] ~\}\\
     ~~~~~~~~ 
   \wedge \Box_{((2k+1)+2i-1,(2k+1)+2i)} [(a \wedge \fut_{[0,1]}(a)) \Rightarrow \past_{[2k+1,2k+1]}a])
 \end{array}
 $
  \end{quote}

\item $DEC_i$: All the $a$'s in the current configuration,  except the last,
have a copy at future distance $2k+1$. All the $a$'s in the next configuration have a copy at past distance $2k+1$.
\begin{quote}
$DEC_i = \Box_{(2i-1,2i)}\{ [(a \wedge \fut_{(0,1)}a)\Rightarrow \fut_{[2k+1,2k+1]} a] 
\wedge [(a \wedge \neg \fut_{[0,1]}a) \Rightarrow \neg \fut_{[2k+1,2k+2]} a]\} \wedge 
\Box_{((2k+1)+2i-1,(2k+1)+2i)}[(a\Rightarrow \past_{[2k+1,2k+1]} a)]$.  
\end{quote}
\end{itemize}
These macros can be used to simulate all type of instructions. 
We explain only the zero-check instruction here.   $p_x$: If $C_i=0$ goto $p_y$, else goto $p_z$.\\
$ \varphi^{x,i=0}_{3}\ =\ \widetilde{\Box}\{b_x \Rightarrow (\bigwedge_{i\in \{1,\ldots,n\}} COPY_i \wedge 
[\Box_{(2i-1,2i)}(\neg a) \Rightarrow ( \fut_{[2k+1,2k+1]}b_y)]\wedge\\
~~~~~~~~~~~~~~~~~~~~~[\fut_{(2i-1,2i)}(a) \Rightarrow (\fut_{[2k+1,2k+1]}b_{z})]\}$\\
The encodings $\varphi^{x,inc_i}_{3}, \varphi^{x,dec_i}_{3}$, corresponding to increment, decrement instructions of counter $i$ can be found in Appendix \ref{undec2:sec}.
The final formula we construct is $\varphi_{\cal M} = \bigwedge_{i=0}^6 \varphi_i$,  where $\varphi_3$ is the conjunction 
of formulae $\varphi^{x,inc_i}_{3}, \varphi^{x,dec_i}_{3}, \varphi^{x,i=0}_{3}$, $i \in \{1,2, \dots,k\}$. 

\subsection*{Encoding Incrementing Counter Machines in $\mtlfutpw$}
\label{nprdec}
To encode a computation of incrementing counter machine ${\cal M}$, we need to represent increment, decrement and no change of counter values in presence of increment errors. Since increment errors need not be checked, the encoding does not need past modality : all formulae except $COPY_i,INC_i,DEC_i$ are in $\mtlfutpw$.
We now give $COPYERR_i,INCERR_i,DECERR_i$ in place of $COPY_i,INC_i,DEC_i$ which allows insertion errors.\\
\noindent 1.  Copy counter with error: Copy all $a$'s without restricting insertions of other $a$'s.\\
$COPYERR_i = \Box_{(2i-1,2i)}[ (a\Rightarrow \fut_{[2k+1,2k+1]} a)]$\\
\noindent 2. Increment counter with insertion errors: Copy all $a$'s inserting at least 1 $a$ after the last copied $a$. $ INCERR_i$ is defined as $\Box_{(2i-1,2i)} \{ [(a \wedge \fut_{(0,1)}a)\Rightarrow (\fut_{[2k+1,2k+1]} a)] $\\
$\wedge [(a \wedge \neg \fut_{(0,1]}a)\Rightarrow (\fut_{[2k+1,2k+1]} a \wedge \fut_{(0,1)}a) ] \}$.\\
 \noindent 3. Decrement counter with error: Copy all the $a$'s in an interval, except the last $a$ \\
$  DECERR_i = 
  \Box_{(2i-1,2i)}  [(a \wedge \fut_{(0,1)}a)\Rightarrow \fut_{[2k+1,2k+1]} a]\\
$
Construction of $\varphi_{\cal M}$ corresponding to the incrementing counter machine 
is done using these macros. This is similar to the construction 
in Appendix \ref{undec2:sec}.

\begin{lemma}
 \label{nprdec:proof}
Let ${\cal M}$ be a $k$-counter incrementing machine. Then, we 
can synthesize a formula $\varphi_{\cal M} \in \mtlfutpw$
such that  ${\cal M}$ halts iff  $\varphi_{\cal M}$ is satisfiable.
\end{lemma}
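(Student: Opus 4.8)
The plan is to establish the biconditional ``$\mathcal{M}$ halts iff $\varphi_{\cal M}$ is satisfiable'' by showing that the models of $\varphi_{\cal M}$ are exactly the encodings of terminating computations of the incrementing counter machine $\mathcal{M}$. The overall structure mirrors the Minsky-machine encoding given above, but with the error-tolerant macros $COPYERR_i, INCERR_i, DECERR_i$ replacing their error-free counterparts. First I would fix the encoding convention already described: a configuration $\langle i, c_1, \dots, c_k\rangle$ occupies the time window $[(2k+1)j, (2k+1)(j+1))$, with $b_i$ at the left integer boundary, the value of counter $C_q$ counted by the number of $a$'s in the open interval $((2k+1)j + 2q-1, (2k+1)j+2q)$, and the padding intervals left empty. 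I would then reuse the generic formulae $\varphi_0, \dots, \varphi_2, \varphi_4, \varphi_5, \varphi_6$ verbatim, since they only enforce the syntactic skeleton of the encoding (instruction placement, empty padding, unique halting, distinct timestamps) and are independent of whether increments are error-free; crucially, these formulae are already in $\mtlfutpw$ as noted in the text.

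Next I would assemble $\varphi_{\cal M}$ as the conjunction $\bigwedge_{i=0}^{6}\varphi_i$, where $\varphi_3$ is now built from the error-tolerant macros. For each instruction type I would write a guarded implication keyed on the instruction symbol $b_x$: for an increment of $C_i$, the body uses $INCERR_i$ together with $COPYERR_j$ for $j \neq i$ and asserts the correct successor instruction $b_y$ at future distance $2k+1$; decrement and zero-test are handled analogously using $DECERR_i$ and $COPYERR_i$. The key point is that these macros only constrain copies in the \emph{forward} direction (``every $a$ has a copy $2k+1$ ahead'') and never forbid the appearance of \emph{extra} $a$'s in the successor interval. This asymmetry is exactly what realises the semantics of $\rightarrow_{incerr}$: a computation step may overshoot the true counter value but may never drop below it, matching the relation $(l',c_1',\dots,c_k') \le (l'',c_1'',\dots,c_k'')$ in the definition of an incrementing move.

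For the forward direction (halting implies satisfiable), I would take a terminating run $W_0 \rightarrow_{incerr} \cdots \rightarrow_{incerr} W_m$, lay out its encoding on the time axis using the window convention, and verify by inspection that each conjunct of $\varphi_{\cal M}$ holds on this timed word. For the converse (satisfiable implies halting), I would argue that any model $\rho \models \varphi_{\cal M}$ must, by $\varphi_0, \varphi_1, \varphi_4, \varphi_5$, have the prescribed skeletal shape, so reading off the $a$-counts in the designated intervals yields a well-defined sequence of configurations beginning at $(1,0,\dots,0)$; the instruction formulae in $\varphi_3$ then guarantee that consecutive configurations are related by a legal $\rightarrow_{incerr}$ move, and $\varphi_2, \varphi_6$ force this sequence to terminate at the HALT instruction. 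Hence $\rho$ encodes a terminating incrementing-machine computation.

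The main obstacle I anticipate is the converse direction, specifically justifying that the forward-only copying enforced by $COPYERR_i$ and $INCERR_i$ genuinely permits (but need not account for) spurious $a$'s in a successor interval, so that every model corresponds to some legal incrementing move rather than to a move that has somehow \emph{lost} $a$'s. I would pin this down by verifying the contrapositive: if some $a$ in interval $j$ lacked its copy in interval $j+1$, then the $\fut_{[2k+1,2k+1]}$ conjunct of the relevant macro would fail, contradicting $\rho \models \varphi_{\cal M}$; therefore the multiset of counter values can only grow or stay fixed between the intended increment/decrement effects, which is precisely $\le$ in the incrementing semantics. Confirming that the zero-test formula $\varphi_3^{x,i=0}$ correctly reads an empty interval as $C_i = 0$ even in the presence of errors elsewhere is the delicate bookkeeping step, and I would handle it by observing that the test only examines the current interval's $a$-content and is unaffected by insertions in other counters' intervals.
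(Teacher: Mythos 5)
Your proposal is correct and follows essentially the same route as the paper: reuse the skeleton formulae $\varphi_0,\varphi_1,\varphi_2,\varphi_4,\varphi_5,\varphi_6$ (already in $\mtlfutpw$), replace the error-free macros with $COPYERR_i$, $INCERR_i$, $DECERR_i$ in the instruction formulae $\varphi_3$, and observe that forward-only copying forbids losing $a$'s while permitting spurious ones, which is exactly the $\leq$ relation in $\rightarrow_{incerr}$. The correctness argument you sketch (models are precisely encodings of terminating incrementing computations, with the zero-test reading the current interval's content) is the argument the paper leaves implicit by pointing to the Minsky construction in the appendix.
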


\begin{lemma}
 \label{undec2:proof}
Let ${\cal M}$ be a $k$-counter Minsky machine. 
Then, we 
can synthesize a formula $\varphi_{\cal M} \in \mtlfullunary$
such that  ${\cal M}$ halts iff  $\varphi_{\cal M}$ is satisfiable.
\end{lemma}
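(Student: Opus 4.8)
The plan is to show that $L(\varphi_{\cal M}) = L_{\cal M}$, where $L_{\cal M}$ is the language of timed words encoding the (unique) halting computation of ${\cal M}$, and then that $L_{\cal M} \neq \emptyset$ precisely when ${\cal M}$ halts; together these give ${\cal M}$ halts iff $\varphi_{\cal M}$ is satisfiable. Before the correctness argument I would first observe that the formula is in the right fragment: every conjunct $\varphi_0,\dots,\varphi_6$, together with each macro $COPY_i$, $INC_i$, $DEC_i$, is built only from the unary operators $\fut_I$ and $\past_I$ (all interval constraints being punctual, such as $\fut_{[2k+1,2k+1]}$, or of the bounded forms $\Box_{[2w,2w+1]}$, $\Box_{(0,2k+1)}$). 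Hence $\varphi_{\cal M} \in \mtlfullunary$, and it remains only to verify the biconditional.

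For the direction ``${\cal M}$ halts $\Rightarrow \varphi_{\cal M}$ satisfiable'', I would take the (finite) halting computation $W_0 \rightarrow_{std} \cdots \rightarrow_{std} W_m$ with $W_0=(1,0,\dots,0)$ and $W_m$ the HALT configuration, and build the timed word $\rho$ prescribed by the encoding: place $b_{i_j}$ at time $(2k+1)j$, and in each sub-interval $((2k+1)j+2q-1,(2k+1)j+2q)$ place exactly $c_q$ copies of $a$. A direct inspection then shows $\rho$ satisfies each conjunct: $\varphi_0,\varphi_1,\varphi_4,\varphi_5$ record that $\rho$ has the prescribed block shape with one event per timestamp; $\varphi_6$ records that $b_n$ eventually occurs and $\varphi_2$ that nothing follows HALT; and each instruction conjunct of $\varphi_3$ holds because adjacent blocks are related exactly by the Minsky move that produced them.

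For the converse I would argue that any $\rho \models \varphi_{\cal M}$ is forced to be such an encoding. The structural formulae $\varphi_0,\varphi_1,\varphi_4,\varphi_5$ pin the $b$-symbols to the multiples of $2k+1$ and confine the $a$'s to the counter sub-intervals, so $\rho$ decomposes canonically into configuration blocks; $\varphi_2,\varphi_6$ force the word to reach and stop at $b_n=$HALT. The instruction conjuncts of $\varphi_3$ then force every block-to-block transition to be a legal error-free move, so reading off the block contents yields a genuine terminating computation of ${\cal M}$, whence ${\cal M}$ halts.

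The crux is verifying that the punctual copy macros enforce \emph{faithful} counter transfer with no insertion errors; this is exactly what separates the present reduction from the incrementing-machine reduction of Lemma \ref{nprdec:proof}. The forward constraint $\fut_{[2k+1,2k+1]}a$ in $COPY_i$ guarantees that every $a$ of the current block has a successor $a$ one configuration ahead, but on its own it cannot forbid spurious extra $a$'s in the next block. The essential use of the past modality is the companion conjunct $\past_{[2k+1,2k+1]}a$, which demands that every $a$ of the next block have a matching predecessor; together the two establish a bijection between the $a$'s of adjacent blocks and thereby rule out precisely the increment errors that a future-only encoding is unable to detect. Checking that $INC_i$ and $DEC_i$ realise increment and decrement while preserving this bijection modulo the single added or dropped $a$ is then routine once the copying argument is in place.
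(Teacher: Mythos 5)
Your proposal is correct and takes essentially the same approach as the paper: both verify the $(2k+1)$-length block encoding in each direction, and both rest on the same crux, namely that the punctual past conjuncts $\past_{[2k+1,2k+1]}a$ in the $COPY_i$, $INC_i$, $DEC_i$ macros pair every $a$ of a block with an $a$ in the preceding block, yielding the bijection that rules out insertion errors (exactly what the future-only encoding of Lemma \ref{nprdec:proof} cannot enforce). No gaps to report.
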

Lemma \ref{nprdec:proof} and Theorem \ref{theo:schnobelen} together say that satisfiability
of  $\mtlfutpw$ is non primitive recursive.
Lemma\ref{undec2:proof} and  Theorem \ref{theo:minsky} together say that satisfiability
of  $\mtlfullunary$ is undecidable. 
It follows from  lemma\ref{undec2:proof} that the recurrent state problem of k-counter incremental machine 
can also be encoded; hence $\mtlfutpw$ is undecidable over infinite words.

\section{Discussion}
We have shown that satisfiability of $\mtlsns$ over finite strictly monotonic timed words is decidable. This subsumes the  previously known decidable fragments $\mitl$ and $\mtl$. The decidability  proof is carried out by extending the technique of temporal projections 
\cite{fundinfo04,deepak08,PrabhakarD06,formats11} to pointwise models in
presence of oversampling. In general, this technique allows us to reduce a formula
of one logic to an 
 equi-satisfiable formulae in a different/simpler logic. 
  We believe that the technique of temporal projections with oversampling has wide applicability and it embodies an interesting notion of 
 equivalence of formulae/logics modulo temporal projections.

In the second part of the paper, we have investigated the decidability of the unary fragment $\gmtlfullunary$ 
which is expressively weaker than full $\gmtlfull$ \cite{concur11}. 
As observed by Rabinovich, the standard construction encoding a  $k$-counter machine configuration in 
unit interval does not work in absence of $\until$ (or $\since$ operator). We have  
arrived at an altered encoding of a configuration using a time interval of length $2k+1$ with suitable gaps. We have shown that the restriction of $\gmtlfull$ to its unary fragment does not lead to any improvement in decidability. Using similar ideas, perfect channel machines can also be encoded into $\mtlfullunary$ and lossy channel machines can be
encoded into $\mtlfutpw$. Our exploration has mainly looked at pointwise models with strictly monotonic time. The case of weakly monotonic time requires more work.

\newpage
\appendix
\centerline{\Large \bf Appendix}

\section{Proof of Lemma \ref{lem:gen}}
\label{proof:gen}
The proof idea is to apply structural induction on $\hat{\varphi}$ taking care of the non action points
that can get added to $\rho'$. 
The base case involves formulae of the form $a \until_I b$.
Let $\rho=(X_1,0)(X_2,t_2)\dots (X_k,t_k)$. If $\rho,i \models a \until_I b$, 
then there exists $j > i$ where $b$ holds, and all points in between $i$ and $j$ satisfy $a$. Also,
$t_j - t_i \in I$.  In an oversampling $\rho'=(Y_1,0)(Y_2,t'_1)\dots (Y_z,t'_z)$, 
by the definition in section \ref{remove-past}, let $g(\rho'(i+s))=\rho(i)$,  
and let $g(\rho'(j+m))=\rho(j)$, $s, m \geq 0$. 
Let $l_1, l_2, \dots, l_{j-i}$ be points such that $i+s < l_1 < l_2 < \dots < l_{j-i}=j+m$ 
and $g(\rho'(l_1))=\rho(i+1), g(\rho'(l_2))=\rho(i+2), \dots, g(\rho'(l_{j-i}))=\rho(j)$. 
Then, by the definition of oversampling 
in section \ref{remove-past}, we have
\begin{enumerate}
 \item $Y_{l_d} \cap \Sigma=X_{i+d}$, for $1 \leq d \leq j-i$, and 
 \item Points $\rho'(h_g)$, $l_g+1 < h_g < l_{g+1}-1$ are 
 non action points, for $1 \leq g \leq j-i-1$. 
\item $t'_{i+l}=t_i, t'_{j+m}=t_j$. Thus, $t'_{j+m}-t'_{i+l}=t_j - t_i$.
\end{enumerate}
Hence, $\rho,i \models a \until_I b$ iff $\rho', i+l \models (act \Rightarrow a)\until_I(act \wedge b)$.  
A similar result holds for past formulae. The argument for the general case follows from the base case above.
The converse can be argued in a similar way.

\section{Converse of Lemma \ref{remove-pastinf}}
\label{pastinf-c}
Conversely, assume $\rho' \models \nu$. Let $\rho' \models \neg  \hat{X}_{[l, \infty)}$. 
 Then, there is a point $i$ such that $\rho', i \models act$ and 
 $\rho', i \nvDash (b \Leftrightarrow \past_{[l, \infty)} (a \wedge act))$. 
 Assume that $\rho', i \models b$, but $\rho', i \nvDash \past_{[l, \infty)} (a \wedge act)$. 
 Then, all points $act$ in $[0,t_i-l]$ are marked $\neg a$. Then, 
 by $\varphi_1$, 
 \begin{enumerate}
  \item all points $act$ in the $[0,l)$ future 
 of the first $a \wedge act$ must be marked $\neg b$
 \item $\neg b \wedge \neg a$ holds at all points $act$ 
 till the first $a \wedge act$. 
 \end{enumerate}
Given the above two points, we cannot have a $b$ at $t_i$. 
Thus, $\rho',i  \nvDash \past_{[l, \infty)} (a \wedge act)$, and $\rho',i \models act$ 
gives $\rho', i \models \neg b$. 

Assume now that  $\rho', i \models \neg b \wedge act$. We
then show that $\rho', i \models \neg \past_{[l, \infty)} (a \wedge act)$. 
Assume the contrary : $\rho', i \models \past_{[l, \infty)} (a \wedge act)$. 
Then there is a point marked $a \wedge act$ in $[0,t_i-l]$. 
Then, by $\varphi_2$, all points $act$ in $[t_i, \infty)$ are marked $b$, contradicting our assumption.

\section{Eliminating $\since$}
\label{since-rem}
Given a formula $\varphi \in \mtlu$ over $\Sigma$, we first flatten the formula to obtain an equisatisfiable 
formula $\varphi_{flat}$ over $\Sigma' \supset \Sigma$. In this section, we elaborate \cite{formats11}, \cite{deepak08} 
on removing the temporal definitions 
of the form $[r \Leftrightarrow (c \since f)]$ from $\varphi_{flat}$, using future operators.
We use the shortform $\nex \varphi$ to denote $false \until \varphi$.

$[r \Leftrightarrow (c \since f)]$ will be replaced by a conjunction $\nu_r$ of the following future formulae:
\begin{itemize}
 \item $\varphi_1: \widetilde{\Box}(f \Rightarrow \nex r)$
 \item $\varphi_2 : \neg r$
 \item $\varphi_3 : \Box[(r \wedge c) \Rightarrow \nex r]$
 \item $\varphi_4 : \Box[r \wedge (\neg c \wedge \neg f) \Rightarrow \nex \neg r]$
 \item $\varphi_5 : \widetilde{\Box}[(\neg r \wedge   \neg f) \Rightarrow \nex \neg r]$
 \end{itemize}

For example, consider the formula $\varphi=(a \wedge (b \wedge (c \until_{(1,2)}[(d \since e) \wedge f])))$. 
The flattened version $\varphi_{flat}=
[(d \since e) \Leftrightarrow w_1] \wedge [w_2 \Leftrightarrow c \until_{(1,2)}[w_1 \wedge f]] \wedge 
(a \wedge b \wedge w_2)$. Replace $[(d \since e) \Leftrightarrow w_1]$ with $\nu_{w_1}$ to obtain 
the equisatisfiable formula 

$\Box(e \Rightarrow \nex w_1) \wedge 
\neg w_1 \wedge \Box[(w_1 \wedge d) \Rightarrow \nex w_1] \wedge 
\Box[w_1 \wedge (\neg d \wedge \neg e) \Rightarrow \nex \neg w_1]
\wedge  \Box[(\neg w_1 \wedge   \neg e) \Rightarrow \nex \neg w_1] \wedge 
[w_2 \Leftrightarrow c \until_{(1,2)}[w_1 \wedge f]] \wedge 
(a \wedge b \wedge w_2)$.

\section{An Example}
\label{ex}
We give an example for elimination of past operator using the technique described in Lemma \ref{past-b1}. Consider the formula $\varphi = (b\Rightarrow \past_{(1,2)}a)\until(EP)$ over $\Sigma = \{a,b\}$. Formula says that before the word ends, wherever there is a $b$ there must be an $a$ in its past with time difference in $(1,2)$. We now eliminate $\past_{(1,2)}$ as follows:
\begin{itemize}
\item \textbf{Flattening:} We construct a $\hat{\varphi}_{flat}$ over $\Sigma = \{a,b,w\}$ which is equi-satisfiable to our original formula (as given by Lemma \ref{lem:flatgen}). 
$\hat{\varphi}_{flat} = (act \Rightarrow(b \Rightarrow (w \wedge act))) \until (EP \wedge act) \wedge X$. $X = \widetilde{\Box}(act \Rightarrow (w  \Leftrightarrow (\past_{[1,2)}(a \wedge act)))$. Note that 
The formula asserts that at any old action point wherever $\past_{[1,2)}a$ is true, it is marked as $w$. Thus $w \wedge act$ acts as a witness for $\past_{[1,2)}a$ and further wherever there is a $b$ that point should be marked as $w$ which means that all the points where $b$ holds $\past_{[1,2)}a$ also holds.
\item \textbf{Canonicalization:} We construct a canonical language $L'$ over $\Sigma_2 = \Sigma \cup \{c,end_w,beg_w\}$ such that for any word $\rho \in L(\hat{\varphi}_{flat})$ there is a unique $\rho' \in L'$ such that $L'\Downarrow^{\Sigma}_{\Sigma_2} = L(\hat{\varphi}_{flat})$. 

\begin{itemize}
\item All integral time points of $\rho'$ is marked as $c$ till the end of the word and no other points are marked as $c$.
\begin{quote}
$C_1 = c \wedge \widetilde{\Box}(c \Rightarrow)c \wedge \Box[ (c \wedge \neg EP) \Rightarrow
\Box_{(0,1)}\neg c \wedge[\fut_{[1,1]} c \vee \fut_{(0,1)}EP]]$
\end{quote}
\item  From every first occurrence of $a$ in an interval of the form $[x,x+1)$ where $x\in I_+ \cup\{0\}$ after exactly 1 time unit, $end_w$ holds.
\begin{quote}
 $C_2 = \widetilde{\Box}[(c \wedge \widetilde{\fut}_{[0,1)}(a \wedge act))
 \Rightarrow 
 [(act \Rightarrow \neg a) \widetilde{\until}_{[0,1)} 
 ((a\wedge act) \wedge [\widetilde{\fut}_{[1,1]}end_w \vee \widetilde{\fut}_{[0,1)}EP])]] \wedge \Box_{[0,1)}\neg end_w$
 \end{quote}
\item From every last occurrence of $a$ in an interval of the form $[x,x+1)$ where $x\in I_+ \cup\{0\}$ after exactly 2
time units, $beg_w$ holds.
\begin{quote}
$C_3 = \widetilde{\Box}[[c \wedge \widetilde{\fut}_{[0,1)}(a\wedge act)] \Rightarrow$
 $~ \widetilde{\fut}_{[0,1)} \{(a\wedge act) \wedge [((act \Rightarrow \neg a
 ) \wedge \neg c)\until c] \wedge $\\
 $(\fut_{[2,2]}beg_w \vee \widetilde{\fut}_{[0,2)}EP)\} \wedge \Box_{[0,2)}\neg beg_w$
\end{quote}
\item $beg_w$ and $end_w$ holds no where else.
$C_4 = c \wedge \widetilde{\Box}_{[0,1)}(act \Rightarrow \neg a) 
 \Rightarrow \widetilde{\Box}_{[1, 2)}\neg end_b \wedge  \widetilde{\Box}_{[2, 3)}\neg beg_b\wedge c \wedge \widetilde{\fut}_{[0,1)} x \Rightarrow 
  (\neg x \widetilde{\until}_{[0,1)}[x \wedge (\neg x \wedge \neg c)
  \until_{(0,1)} (c \vee EP)])$ for $x \in \{end_b,beg_b\}$
  \end{itemize}
  All $\rho'$ which cannot be described as above is not in the $L'$.
  \item \textbf{Eliminating Past Operator}: In this step we eliminate past operator by constructing a formula $\psi$ which when replaces $X$, results in the same timed language. Note that $X$ exactly identifies which old action points should be marked as $w$ and which should not be marked as $w$.
  \begin{itemize}
  \item \textbf{Marking points as $w$:} According to $X$ all those old action points having $a$ in their past $(1,2)$ should be marked as $w$. Which means that from any point which is marked $a$ all the old action points should be marked $w$.
  \begin{quote}
  $Y_1 = \widetilde{\Box}[(a\wedge act) \Rightarrow \Box_{[1,2)}(act \Rightarrow b)]$
  \end{quote}
  \item \textbf{Avoiding all other points to be marked as $b$} Here we restrict the marking of $b$ to only those points which are marked by $Y_1$. By main lemma \ref{past-b1} we give following formulas, which restrict this behavior. In brief, all those points from $beg_b$ to $end_b$ of the same interval should not be marked as $b$. 
  For $x= \neg(end_b \wedge c \wedge EP)$, 
  \begin{quote}
  $Y_2 = \widetilde{\Box}[(c\wedge \Box_{[0,1)} \neg beg_b) \Rightarrow 
    ((act \Rightarrow \neg b) \wedge x) \widetilde{\until} (end_b \vee c \vee EP)]\\
    \wedge 
    \widetilde{\Box}\{(c \wedge [\neg end_b
    \until_{[0,1)} beg_b])
    \Rightarrow$\\
    $~~~~~~~~~~~~~~~~~~~~~~~~~~~~~\fut_{[0,1)}[(beg_b \wedge 
    (((act \Rightarrow \neg b) \wedge x)\until (end_b \vee c \vee EP)))] \}$
  \end{quote}
  \end{itemize}
\end{itemize}
The temporal projection $X$ can be replaced by $\psi = C_1\wedge C_2\wedge C_3\wedge C_4\wedge Y_1\wedge Y_2$. Note that $\psi$ is pure $\mtl$ formula such that 
$\delta = (act \Rightarrow(b \Rightarrow (w \wedge act))) \until (EP \wedge act) \wedge \psi$ equisatisfiable to $\varphi$.


\section{Proof of Lemma \ref{game:proof}}
\label{game:sec}
We prove that the $\mtluns, \mtlsns$ are strictly less expressive than $\mtlfull$ using EF Games. 
We omit the game strategies here and give the candidate formula and pair of words.

\noindent (i) $\mtlfutpw \nsubseteq \mtluns$\\
We consider a formula in $MTL^{pw}[\fut_I]$,  $\varphi= \fut_{(0,1)}\{a \wedge \neg\fut_{[1,1]}(a\vee b)\}$. For an $n$-round game, 
consider the words $w_1=W_aW_b$ and $w_2=W_aW'_b$ with 
\begin{itemize}
 \item $W_a=(a, \delta)(a, 2 \delta) \dots (a,i\delta-\kappa)\underline{(a,i \delta)} \dots (a, n \delta)$
\item $W_b=(b, 1+\delta)(b, 1+2\delta) \dots (b,1+i\delta-\kappa)\underline{(b, 1+i\delta)} \dots (b, 1+n\delta)$
\item $W'_b=(b, 1+\delta)(b, 1+2\delta) \dots (b, 1+(i-1)\delta)(b,1+i\delta-\kappa)(b, 1+i\delta)(b, 1+(i+1)\delta) \dots (b, 1+n\delta)$ 
 \end{itemize}

$w_1 \models \varphi$, but $w_2 \nvDash \varphi$. 
The underlined $b$ in $W_b$ shows that there is a $b$ at distance 1 from $a$; however,
this is not the case with $W'_b$. 
The key observation for duplicator's win in an $\until_{NS}, \since_I$ game is that (a) any non-singular future move of spoiler can be mimicked by the duplicator from $W_aW_b$ or $W_a W'_b$ (b)
for any singular past move made by spoiler on $W_aW_b$, duplicator 
has  a reply from $W_aW'_b$. The same holds for any singular past move of spoiler made from 
$W_aW'_b$.\\

\noindent (ii) $\mtlfutp \nsubseteq \mtlsns$\\
 We consider a formula in $MTL^{pw}[\fut_I]$, $\phi' =  \fut \{b\wedge \neg \past_{[1,1]}(a \vee b)\}$. 
   We show that there is no way to express this formula in  $\mtlsns$. This is symmetrical to (i). For an $n$ round game, 
   consider the words 
   $w_1=W_aW_b$ and $w_2=W'_aW_b$ with 
\begin{itemize}
 \item $W_a=(a, \delta)(a, 2 \delta) \dots  (a, (i-1)\delta)(a, i\delta - \kappa)(a,i \delta)\dots (a, n \delta)$
 \item $W'_a=(a, \delta)(a, 2 \delta) \dots  (a, (i-1)\delta)(a,i \delta)\dots (a, n \delta)$
\item $W_b=(b, 1+\delta)(b, 1+2\delta) \dots (b, 1+(i-1)\delta)\underline{(b,1+i\delta-\kappa)}
(b, 1+i\delta) \dots (b, 1+n\delta)$
 \end{itemize}
   
 $w_1 \nvDash \varphi', w_2 \models \varphi'$.   
 The underlined $b$ in $W_b$ shows that there is an $a$ at past distance 1 
 in $W_a$, but not in $W'_a$. 
 The key observation for duplicator's win 
 in an $n$-round $\until_I, \since_{NS}$ game is that (a) any non-singular past move by spoiler from $W_a, W_b$ 
 or from $W'_a, W_b$ can be answered by duplicator, (b) for any singular future move 
   made by spoiler on $W_a, W_b$, duplicator 
has  a reply from $W'_a, W_b$. The same holds for any singular future move of spoiler made from 
$W'_a, W_b$.\\

\noindent (iii) $\mitlfp \nsubseteq \mtlu$.
We consider the $\mitlfp$ formula $\varphi''=\fut_{(1,2)}[a \wedge \neg \past_{(1,2)}a]$, and show that there is no  
way to express it using $\until_I, \since$. For an $n$ round game, consider the words
$w_1=W_1W_2$ and $w_2=W_1W'_2$ with
\begin{itemize}
 \item $W_1=(a, 0.5+\epsilon) \dots (a, 0.5+n\epsilon)(a,0.9+\epsilon)\dots(a,0.9+n\epsilon)$
 \item $W_2=\underline{(a, 1.5)}(a,1.6+\epsilon)(a, 1.6+2 \epsilon) \dots (a, 1.6+n \epsilon)$
 \item $W'_2=(a,1.6+\epsilon)(a, 1.6+2 \epsilon) \dots (a, 1.6+n \epsilon)$
\end{itemize}
for a very small $\epsilon >0$. Clearly, $w_1 \models \varphi'', w_2 \nvDash \varphi''$. 
The underlined $a$ in $W_2$ shows the $a$ in (1,2) which has no $a$ 
in $\past_{(1,2)}$. The key observation for duplicator's win 
 in an $n$-round $\until_I, \since$ game is that (a) when spoiler picks any position 
 in $W_1$, duplicator can play copy cat, (b) when spoiler  picks $(a, 1.5)$ in $W_2$ 
 as part of a future $(0,1)$ move from $W_1$, duplicator picks $0.9+n \epsilon$ in $W'_2$. All until, since 
 moves from the configuration $[(a.1.5),(a,0.9+n\epsilon)]$ are symmetric.

\section{Lemma \ref{undec2:proof}: All Formulae}
\label{undec2:sec}
 The main challenge in encoding Minsky Machine is to avoid insertion errors in the consecutive configuration. To avoid such insertion errors, 
we make use of the $\past_I$ operator. 
We make use of the shorthand $B$ for $\bigvee_{i\in {1,\ldots,n}}b_{i}$.

\begin{enumerate}
 \item Insertion of $b_{i_j}$ exactly at points $(2k+1)j$, $j \in \N$:

\begin{quote}
$
 \varphi_{0}\ =\ b_1\ \wedge \widetilde{\Box}\{B \wedge \fut B \Rightarrow \fut_{[2k+1,2k+1]}B\} \wedge \widetilde{\Box}\{B \Rightarrow (\neg \fut_{(0,2k+1)} B) \}
$
\end{quote} 
\item Intervals with no $a$: Intervals $((2k+1)j+2w,(2k+1)j+2w+1)$
have no $a$'s for $0 \leq w \leq k$.

\begin {quote}
$
 \varphi_{1}\ =\widetilde{\Box}\{B\ \Rightarrow\ \bigwedge_{w \in \{0,\ldots,k\}} (\widetilde{\Box}_{[2w,2w+1]}(\neg a)). 
$
\end {quote}
\end {enumerate}
We define macros for copying, incrementing and decrementing counters.
\begin{itemize}
 \item $COPY_i$: Every $a$ occurring in the current interval has a copy at a future distance 2k+1, 
and every $a$ occurring in the next interval has an $a$ at a past distance 2k+1. This ensures the absence 
of insertion errors.
  \begin{quote}
    $COPY_i = \Box_{(2i-1,2i)}[(a\Rightarrow \fut_{[2k+1,2k+1]} a)] \wedge \Box_{((2k+1)+2i-1,(2k+1)+2i)}[(a\Rightarrow \past_{[2k+1,2k+1]} a)]$
  \end{quote}
 
\item $INC_i$: All the $a$'s in the current configuration are copied to the next configuration, 
at a future distance 2k+1; every $a$ except the last one in the next configuration 
has an $a$ at past distance 2k+1.
  \begin{quote}  
 $INC_i = 
 \begin{array}[t]{l}  
   (\Box_{(2i-1,2i)}\{ ~[a \Rightarrow \fut_{[2k+1,2k+1]} a] \\
   \wedge [(a \wedge \neg \fut_{(0,1)}a) \Rightarrow (\fut_{(2k+1,2k+2)}a 
   \wedge \Box_{(2k+1,2k+2)}(a \Rightarrow \Box_{(0,1)}(false)))] ~\}\\
     ~~~~~~~~ 
   \wedge \Box_{((2k+1)+2i-1,(2k+1)+2i)} [(a \wedge \fut_{[0,1]}(a)) \Rightarrow \past_{[2k+1,2k+1]}a])
 \end{array}
 $
  \end{quote}

\item $DEC_i$: All the $a$'s in the current configuration,  except the last one,
have a copy at future distance 2k+1. All the $a$'s in the next configuration have a copy at past distance 2k+1.
\begin{quote}
$DEC_i = \Box_{(2i-1,2i)}\{ [(a \wedge \fut_{(0,1)}a)\Rightarrow \fut_{[2k+1,2k+1]} a] 
\wedge [(a \wedge \neg F_{[0,1]}a) \Rightarrow \neg F_{[2k+1,2k+2]} a]\} \wedge 
\Box_{((2k+1)+2i-1,(2k+1)+2i)}[(a\Rightarrow \past_{[2k+1,2k+1]} a)]$.
\end{quote}
\end{itemize}
Using the macros, we define formulae for increment, decrement and conditional jumps.
\begin{enumerate}
\item $p_x$: If $C_i=0$ goto $p_y$, else goto $p_z$
\begin{quote}
$ 
 \varphi^{x,i=0}_{3}\ =\ \widetilde{\Box}\{b_x \Rightarrow (\bigwedge_{i\in \{1,\ldots,n\}} COPY_i \wedge 
 [\Box_{(2i-1,2i)}(\neg a) \Rightarrow( \fut_{[2k+1,2k+1]}b_y)]\wedge [\fut_{(2i-1,2i)}(a) \Rightarrow(\fut_{[2k+1,2k+1]}b_{z})]\}
 $
\end{quote}
\item $p_x$: $Inc(C_i)$ goto $p_y$
\begin{quote}
$ 
 \varphi^{x,inc_i}_{3}\ =\  \widetilde{\Box}[b_x \Rightarrow (\bigwedge_{j\in \{0,\ldots,k\},j\ne i}COPY_j \wedge \fut_{[2k+1,2k+1]}b_{y} \wedge INC_i)]
$
\end {quote}
\item $p_x$: $Dec(C_i)$ goto $p_y$
\begin{quote}
$ 
 \varphi^{x,dec_i}_{3}\ =\  \widetilde{\Box}[b_x \Rightarrow (\bigwedge_{j\in \{0,\ldots,k\},j\ne i}COPY_j \wedge \fut_{[2k+1,2k+1]}b_{y} \wedge DEC_i)]
$
\end {quote}

\item No instructions are executed after HALT:
\begin{quote}
$ 
 \varphi_{2}\ =\  \widetilde{\Box}[b_n \Rightarrow \Box_{[2k+1,\infty)}(false)]
$
\end {quote}
\item Initial Configuration:  
\begin{quote}
$
 \varphi_4\ =\ b_1 \wedge \Box_{(0,2k+1)}(\neg (B \wedge a))
$
\end {quote}
\item Mutual Exclusion:- At any point of time, exactly one event takes place.
\begin{quote}
$
  \varphi_5\ =\ \bigwedge_{y \in \Sigma_{\cal M}}(y \Rightarrow \neg(\bigvee_{x \in \Sigma_{\cal M} \setminus \{y\}}(x))
$
\end{quote}

\item Termination: The HALT instruction will be seen sometime in the future.
\begin {quote}
$
 \varphi_6= \widetilde{\fut} b_n
$
\end {quote}
 
\end{enumerate}
The final formula we construct is $\varphi_{\cal M} = \bigwedge_{i=0}^6 \varphi_i$,  where $\varphi_3$ is the conjunction 
of formulae $\varphi^{inc_i}_{3}, \varphi^{dec_i}_{3}, \varphi^{i=0}_{3}$, $i \in \{1,2, \dots,k\}$.

\oomit{
\section{Lemma \ref{undec2:proof} : Proof of Equivalence}
\label{undec:proof}
\begin{proof}
  Given a k counter machine ${\cal M}$, construct the formula $\varphi_{\cal M}$ as 
in section \ref{undec2:proof}. Assume that ${\cal M}$ halts. Let 
$C_0C_1 \dots C_l$ be a computation of ${\cal M}$, with $C_0$ being the initial configuration 
$\langle p_1, 0, 0 \rangle$ and $C_l$ the halting computation $\langle p_n, c, d \rangle$, 
$p_n=HALT$. We now show that $\varphi_{\cal M}$ is satisfiable.
For all $m \in \N$, let $L({\cal M}_m)$ denote the set of prefixes of length $m$, of $L({\cal M})$, and 
let $L({\varphi_{\cal M}}_m)$ be the the set of prefixes of length $m$, of $L(\varphi_{\cal M})$.
  We show that $L({{\cal M}}_m)=L({\varphi_{\cal M}}_m)$, for all $m \in \N$; hence, 
in particular, $L(\cal M)=L(\varphi_{\cal M})$.\\

We induct on the length $l$ of the computation of ${\cal M}$. If $l=1$, then 
we have $C_0C_1$ is a halting computation of ${\cal M}$, with $C_0=\langle p_1, 0, 0 \rangle$ and 
$C_1=\langle HALT, 0, 0\rangle$.  
By our encoding of ${\cal M}$,  $L({\cal M}_1)=L(\cal M)$ consists of the 
word $(b_1,0)(b_n, 2k+1)$. Now, lets see the formula $\varphi_{\cal M}$ corresponding 
to ${\cal M}$. 
\begin{enumerate}
 \item $L(\varphi_1)$ consists of words ${\cal L}=\{(p_1, 0)w_0(p_x, 2k+1)w_1(p_y, 4k+2)w_2 \dots\}$
\item By $\varphi_1,\varphi_2$, we say that the $w_i$ has no 
action points in $[(2k+1)i+2w,(2k+1)i+2w+1]$ where $w\in \{0,\ldots,k\}$ and there is no $b$ in the interval $((2k+1)i,(2k+1)(i+1))$
\item $\varphi_3^n$ says that after $HALT$, there are no more action points.
\item $\varphi_4$ says that there are no action points in $w_0$.
\item $\varphi_5$ says that exactly one member of $\Sigma_{\cal M}$ can hold at a given point of time, and 
$\varphi_7$ says that $HALT$ occurs at some time.  
\end{enumerate}
 Combining all the above, we get $L({\varphi_{\cal M}}_1)=L(\varphi_{\cal M})$ as 
$(b_1, 0)(b_n,2k+1)$, which is same as $L({{\cal M}}_1)=L(\cal M)$.

Now assume that for any prefix of length $s$ of a halting computation of ${\cal M}$, 
$k \leq l$, $L({\varphi_{\cal M}}_k)=L({\cal M}_k)$.

Consider a halting computation $C_0C_1 \dots C_{l-2}C_{l-1}C_l$ of ${\cal M}$. By inductive hypothesis, we 
have $L({\cal M}_{l-2})=L({\varphi_{\cal M}}_{l-2})$. Assume that $C_{l-2}=\langle p, c, d\rangle$, 
$C_{l-1}=\langle q, c', d'\rangle$ and $C_l=\langle HALT, c', d' \rangle$.  
Various cases corresponding to $p$ can be considered; we only look at the case 
$p_x : Inc(C_1)$, goto $p_y$ and $p_y: HALT$. The others being similar ($p_x$ could be checking for some $C_i$ to be zero and jumping to $p_y$, or $p_x$ could decrement $C_i$ and goto $p_y$).
\begin{enumerate}
 \item  Consider a word $x \in L({\cal M}_{l-2})$. By our encoding, 
$x$ is of the form 
$(b_1,0) \dots (b_x, (2k+1)(l-2))(a,(2k+1)(l-2)+1+\epsilon)\dots(a,5(l-2)+1+c_1 \epsilon)\dots(a, (2k+1)(l-2)+3+\kappa) \dots (a, (2k+1)(l-2)+3+c_2\kappa) \ldots \dots (a, (2k+1)(l-2)+2k-1+\delta)\dots (a, (2k+1)(l-2)+2k-1+c_k\delta)$. By assumption, $x \in L({\varphi_{\cal M}}_{l-2})$. 
\item On execution of $p_x$, again by our encoding, we get 
a word $y=x.(a,(2k+1)(l-2)+1+\epsilon)\dots(a,5(l-2)+1+c_1 \epsilon)(a,5(l-2)+1+(c_1+1) \epsilon)\dots(a, (2k+1)(l-2)+3+\kappa) \dots (a, (2k+1)(l-2)+3+c_2\kappa) \ldots \dots (a, (2k+1)(l-2)+2k-1+\delta)\dots (a, (2k+1)(l-2)+2k-1+c_k\delta)$ in $L({\cal M}_{l-1})$.
\item By induction hypothesis, we have 
$L({\cal M}_{l-2}) = L({\varphi_{\cal M}}_{l-2})$. Given $C_{l-2}$, 
we look at the relevant formulae corresponding to $p_x: Inc(C_1)$, goto $p_y$. \\

This is 
$\varphi^{x,inc_1}_{3}\ =\  \widetilde{\Box}[b_x \Rightarrow (\bigwedge {j\in \{0,\ldots,k\} j\ne i}COPY_j \wedge \fut_{[2k+1,2k+1]}b_{y} \wedge INC_i)]
$
$COPY_j = \Box_{(2j-1,2j)}[(a\Rightarrow \fut_{[2k+1,2k+1]} a)] \wedge \Box_{((2k+1)+2j-1,(2k+1)+2j)}[(a\Rightarrow \past_{[2k+1,2k+1]} a)]$
 and \\
$INC_i = 
 \begin{array}[t]{l}  
   (\Box_{(2i-1,2i)}\{ ~[a \Rightarrow \fut_{[2k+1,2k+1]} a] \\
   \wedge [(a \wedge \neg \fut_{(0,1)}a) \Rightarrow (\fut_{(2k+1,2k+2)}a 
   \wedge \Box_{(2k+1,2k+2)}(a \Rightarrow \Box_{(0,1)}(false)))] ~\}\\
     ~~~~~~~~ 
   \wedge \Box_{((2k+1)+2i-1,(2k+1)+2i)} [(a \wedge \fut_{[0,1]}(a)) \Rightarrow \past_{[2k+1,2k+1]}a])
 \end{array}
 $

By the definition of $\varphi^{i}_{3}$, we obtain from   ~~~~~~~~ 
$(b_x, (2k+1)(l-2))(a,(2k+1)(l-2)+1+\epsilon)\dots(a,(2k+1)(l-2)+1+c_1 \epsilon)concat ((a, (2k+1)(l-2)+3+\kappa) \dots (a, (2k+1)(l-2)+3+c_i\kappa))$ where $i>1$  the following:
\begin{enumerate}
\item $(b_y, (2k+1)(l-1))$ due to $b_x \Rightarrow (\fut_{[2k+1,2k+1]}b_y)$, 
\item $concat((a,(2k+1)(l-1)+3+\kappa)\dots(a,5(l-1)+3+c_i \kappa))$ for $i>2$ due to $\bigwedge {j\in \{0,\ldots,k\} j\ne i}COPY_j$.Note that there are no extra $a_2$'s (say, erroneously inserted $a$'s) possible between the copied $a$'s because if there was any erroneously inserted $a$ (say at $(2k+1)(l-1)+3+x \delta$) it would enforce extra $a$ at $(2k+1)(l-2)+3+x \delta$ due to $\Box_{((2k+1)+2j-1,(2k+1)+2j)}[(a\Rightarrow \past_{[2k+1,2k+1]} a)]$ sub-formula of $COPY_j$ which would lead to contradiction of induction hypothesis,
\item $(a,5(l-1)+1+\epsilon)\dots(a,5(l-1)+1+c \epsilon)(a,5(l-1)+1+(c+1)\epsilon)$ due to $INC_C$.
Note that there are no erroneously inserted $a$ before $(a,5(l-1)+1+c \epsilon)$ which follows from the similar argument in the previous point. Moreover, there is exactly one $a$ after $(a,(2k+1)(l-1)+1+c \epsilon)$ because $\Box_{(2i-1,2i)}[(a \wedge \neg \fut_{(0,1)}a) \Rightarrow (\fut_{(2k+1,2k+2)}a 
   \wedge \Box_{(2k+1,2k+2)}(a \Rightarrow \Box_{(0,1)}(false)))]$ enforces exactly one $a$ after the ;ast cop[ied $a$ in the next configuration,
\item This results in $y$ given $x$. Hence, $y \in L({\cal M}_{l-1})$. 
\end{enumerate}
Since $q:HALT$ is the last instruction, it can be seen as above that given $y$, we obtain $z=y(HALT, (2k+1)l)$
in both $L({\cal M}_{l})=L({\cal M})$ and $L({\varphi_{\cal M}}_{l})=L(\varphi_{\cal M})$.
\end{enumerate}
Thus, from the above we conclude that if ${\cal M}$ halts, then 
$L({\cal M}) \subseteq L(\varphi_{\cal M})$. In a similar vein, we can show that 
$L(\varphi_{\cal M}) \subseteq L({\cal M})$. Thus, when ${\cal M}$ halts, 
the languages of the encoding of the computation of ${\cal M}$ and 
that of $\varphi_{\cal M}$ coincide.

To see the converse, assume that  $L({\cal M})=L(\varphi_{\cal M})$. We do not go into the details 
of the proof here; we just remark that by construction of $\varphi_{\cal M}$, 
$L(\varphi_{\cal M})$ is non-empty only when $\varphi_7$ is met, that is when ${\cal M}$ halts. That is, 
$\varphi_{\cal M}$ is satisfiable only when ${\cal M}$ halts. We can induct on the length 
of strings in  $L(\varphi_{\cal M})$, and in each case, 
come up with a 2 counter machine ${\cal M}$ that halts.
\qed
\end{proof}
}

\section{Syntax and Semantics of $\mathsf{MTL}^c$}
\label{cont}
\noindent{\bf Continuous Semantics} : 
Continuous time  MTL formulae are typically evaluated over \emph{timed state sequences} (TSS), where 
a system is assumed to continue in a state until a state change takes place. Here we interpret over timed words, 
but now a formulae can be asserted at any arbitrary time point. 
A small change here is that the atomic formula $a \in \Sigma$ can only hold at an action point 
labeled $A$ with $a \in A$ and not for an interval of time  after the event happens. 
Given a timed word $\sigma$, and an MTL formula $\varphi$, in the continuous semantics, 
the temporal connectives of $\varphi$ quantify over the whole time domain $\R_{\geq 0}$.

For an alphabet $\Sigma$, a timed word $\rho=(\sigma, \tau)=(A_1,t_1)\dots(A_n,t_n)$, 
a time $t \in \R_{\geq 0}$, 
 and an MTL formula $\varphi$, the satisfaction of $\varphi$ at time $t$ 
of $\rho$ is denoted $(\rho, t) \models \varphi$, and is defined as follows:
Let $\rho(t_i)=A_i$.

\noindent
\begin{tabular}{l c l}
$\rho, t \vDash a$ 				& \; $\leftrightarrow$ \;& $a \in \rho(t)$ \\
$\rho, t \vDash \neg \psi$ 			&$\leftrightarrow$& $\rho,t \nvDash \psi$ \\
$\rho, t \vDash \psi_1 \wedge \psi_2 $ 		&$\leftrightarrow$& $\rho,t \vDash \psi_1$ and
$\rho, t \vDash \psi_2 $ \\
$\rho, t\vDash \psi_1 \until_I \psi_2 $ 	&$\leftrightarrow$& $\exists t' \in t +  I \wedge t'>t, \rho,t'
\vDash \psi_2$ and $\forall t'' \in (t,t')$, $\rho,t'' \vDash \psi_1$ \\
$\rho, t\vDash \psi_1 \since_I \psi_2 $ 	&$\leftrightarrow$& $\exists t' \in t - I \wedge t'<t, 
\rho,t' \vDash \psi_2$ and $\forall t'' \in (t',t)$, $\rho,t'' \vDash \psi_1$ \\
\end{tabular}

We say that $\rho \models \varphi$ iff $\rho, 0 \models \varphi$. Let $L(\varphi)=\{\rho \mid \rho, 0 \models \varphi\}$.

\subsection{Undecidability of Continuous time Logic $\mtlfut$}
\label{undec:mtlc}
In this section, we show undecidability of $\mtlfut$, by encoding 2 counter machines.  
\begin{enumerate}
 \item Copy counter exactly: both action and non-action points are copied.
 \[
     COPY^c_C =  \Box_{[1,2]}[(\neg action)\Rightarrow \fut_{[5,5]} (\neg action) \wedge (a\Rightarrow \fut_{[5,5]} a)]
\]
  \item Increment counter by one exactly: 
\[
   INC^c_C = \Box_{[1,2]}
    \begin{array}[t]{l}   
       \{[(\neg action \wedge \fut_{(0,1]}a)\Rightarrow (\fut_{[5,5]} (\neg action))] \\
           ~~\wedge [(a \wedge \fut_{(0,1]}a)\Rightarrow (\fut_{[5,5]} a))] \\ 
          ~~ \wedge [(a \wedge \neg \fut_{(0,1]}a)\Rightarrow (\fut_{[5,5]} a \wedge \fut_{(5,6)}a 
\wedge \Box_{(5,6]}(a \Rightarrow \Box_{(0,1)}(\neg action)))]\}
  \end{array}
\]

\item Decrement counter by one exactly:
\[
   DEC^c_C =  \Box_{[1,2]}
\begin{array}[t]{l}   
   \{[(\neg action \wedge \fut_{(0,1]}a)\Rightarrow \fut_{[5,5]} (\neg action)]\\
    ~~ \wedge [(a \wedge \fut_{(0,1]}a)\Rightarrow \fut_{[5,5]} a] \\
    ~~ \wedge [(a \wedge \neg \fut_{(0,1]}a)\Rightarrow (\Box_{[5,6]} \neg action)] \}
\end{array}
\]
\end{enumerate}
The other formulae needed can be obtained from section \ref{minsky} 
with $k=2$.

Let the resulting version of formula be called $\varphi''_{\cal M}$. 
\begin{lemma}
 \label{undec1:proof}
Let ${\cal M}$ be a two counter Minsky machine and let $\varphi''_{\cal M} \in \mtlfut$ be
the formula as above. Then, ${\cal M}$ halts iff  $\varphi''_{\cal M}$ is satisfiable.
\end{lemma}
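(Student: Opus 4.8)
The plan is to reduce from the halting problem for two-counter Minsky machines, which is undecidable by Theorem~\ref{theo:minsky}. Given such a machine ${\cal M}$, I form $\varphi''_{\cal M}$ by instantiating the generic encoding of Section~\ref{minsky} with $k=2$ and replacing the copy, increment and decrement macros by their continuous counterparts $COPY^c_C$, $INC^c_C$, $DEC^c_C$. I then establish the two directions of the equivalence separately, using the continuous semantics of Appendix~\ref{cont} throughout.

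For the forward direction, I take the (deterministic, hence unique) halting computation $C_0 \to C_1 \to \cdots \to C_m$ of ${\cal M}$ and build a timed word that places the instruction symbol $b_{i_j}$ at each integer multiple $5j$, encodes the value of $C_1$ (resp.\ $C_2$) in configuration $j$ as the appropriate number of $a$'s inside the interval $(5j+1,5j+2)$ (resp.\ $(5j+3,5j+4)$), and leaves the separating intervals empty. A routine inspection shows that every conjunct of $\varphi''_{\cal M}$ holds at this model; the only clauses needing care are the $\fut_{[5,5]}$ and $\fut_{(5,6)}$ subformulae of $INC^c_C$ and $DEC^c_C$, which hold precisely because the update performed on $C_j$ to obtain $C_{j+1}$ matches the instruction labelled $b_{i_j}$, and because the HALT clause $\varphi_2$ is met once $b_n$ is reached.

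The nontrivial direction is soundness: from an arbitrary model $\rho \models \varphi''_{\cal M}$ I must extract a genuine error-free computation. Formulae $\varphi_0,\varphi_1,\varphi_4$ pin the instruction symbols to the integer grid and evacuate the gap intervals, forcing $\rho$ into a sequence of length-$5$ configuration blocks, while $\varphi_6$ guarantees that some block carries $b_n=\mathrm{HALT}$. I then argue, by induction on the block index $j$, that the $a$-content of block $j{+}1$ is exactly the configuration obtained from block $j$ by executing the instruction named in block $j$. The heart of the matter --- and the step I expect to be the main obstacle --- is ruling out \emph{insertion errors}, i.e.\ showing that no spurious $a$ can appear in $C_{j+1}$ beyond what the relevant macro mandates. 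Here the continuous semantics is essential: each macro copies forward not only the action points but also the \emph{non-action} points, through the clause $(\neg action)\Rightarrow \fut_{[5,5]}(\neg action)$. Thus if an unexpected $a$ occurred at some $s\in(5(j{+}1)+1,5(j{+}1)+2)$, the point $s-5$ in the current block must have been a non-action point (otherwise $s$ is a legitimate copy), and copying non-action forward would force $\neg action$ at $s$, a contradiction. This closes exactly the gap that in the pointwise future-only logic $\mtlfutpw$ can only be bridged by past operators, which is precisely why that logic encodes merely \emph{incrementing} counter machines (Lemma~\ref{nprdec:proof}) rather than error-free ones.

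Combining the two directions, any model $\rho$ of $\varphi''_{\cal M}$ encodes the unique computation of the deterministic machine ${\cal M}$, and the presence of a $b_n$ block together with $\varphi_2$ forcing no activity past HALT shows that this computation terminates. Hence $\varphi''_{\cal M}$ is satisfiable if and only if ${\cal M}$ halts, establishing the lemma; undecidability of $\mtlfut$ over continuous time follows immediately.
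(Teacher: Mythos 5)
Your proposal is correct and follows essentially the same route as the paper: the paper proves this lemma by appealing to the induction-on-configurations argument of Lemma~\ref{undec2:proof}, with the continuous-time macros playing the role that the past modalities play in the pointwise encoding. You correctly isolate the crux of that substitution --- the clause $(\neg action)\Rightarrow \fut_{[5,5]}(\neg action)$, evaluated at arbitrary (non-action) time points, is exactly what rules out insertion errors and turns the incrementing-machine encoding into a faithful Minsky-machine encoding --- so your writeup is, if anything, more explicit than the paper's own proof.
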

The proof of equivalence of $L(\cal M)$ and $L(\varphi''_{\cal M})$ is similar to Lemma \ref{undec2:proof}.

\end{document}